\newcommand*\Let[2]{\State #1 $\gets$ #2}
\algnewcommand\PublicKnowledge{\item[\textbf{Public knowledge:}]}
\algnewcommand\Input{\item[\textbf{Input:}]}
\algnewcommand\Notation{\item[\textbf{Notation:}]}
\algrenewcommand\algorithmicthen{}
\DeclareMathOperator{\sort}{sort}
\DeclareMathOperator{\shuffle}{shuffle}
\newcommand{\share}[1]{[\![#1]\!]}
\newcommand{\osort}[1]{\share{\sort(#1)}}
\newcommand{\oshuffle}[1]{\share{\shuffle(#1)}}
\DeclareMathOperator{\val}{val}
\DeclareMathOperator{\coord}{coord}
\DeclareMathOperator{\minleft}{MinLeft}
\DeclareMathOperator{\maxleft}{MaxLeft}
\DeclareMathOperator{\minright}{MinRight}
\DeclareMathOperator{\maxright}{MaxRight}
\DeclareMathOperator{\nnz}{nnz}
\begin{document}

\title[Secure sparse multiplications]{Secure Sparse Matrix Multiplications and their Applications to Privacy-Preserving Machine Learning}

\author{Marc Damie}
\orcid{0000-0002-9484-4460}
\affiliation{%
	\institution{University of Twente}
	\city{Enschede}
	\country{The Netherlands}}

\author{Florian Hahn}
\orcid{0000-0003-4049-5354}
\affiliation{%
	\institution{University of Twente}
	\city{Enschede}
	\country{The Netherlands}}

\author{Andreas Peter}
\orcid{0000-0003-2929-5001}
\affiliation{%
	\institution{University of Oldenburg}
	\city{Oldenburg}
	\country{Germany}}

\author{Jan Ramon}
\orcid{0000-0002-0558-7176}
\affiliation{%
	\institution{Inria}
	\city{Villeneuve-d'Ascq}
	\country{France}}

\renewcommand{\shortauthors}{Damie et al.}

\begin{abstract}
	To preserve data privacy, multi-party computation (MPC) enables executing Machine Learning (ML) algorithms on private data.
	However, MPC frameworks do not include optimized operations on sparse data.
	This absence makes them unsuitable for ML applications involving sparse data; e.g., recommender systems or genomics.
	Even in plaintext, such applications involve high-dimensional sparse data, that cannot be processed without sparsity-related optimizations due to prohibitively large memory requirements.

	Since matrix multiplication is a central building block of ML algorithms, our work proposes dedicated MPC algorithms to multiply secret-shared sparse matrices.
	Our sparse algorithms have several advantages over secure dense matrix multiplications (i.e., the classic multiplication).
	On the one hand, they avoid the memory issues caused by the ``dense'' data representation of dense multiplications.
	On the other hand, our algorithms can significantly reduce communication costs (up to $\times1000$) for realistic problem sizes.
	We validate our algorithms in two machine learning applications where dense matrix multiplications are impractical.
	Finally, we take inspiration from real-world sparse data properties to build 3 techniques minimizing the public knowledge necessary to secure sparse algorithms.
\end{abstract}

\begin{CCSXML}
	<ccs2012>
	<concept>
	<concept_id>10002978.10003029.10011150</concept_id>
	<concept_desc>Security and privacy~Privacy protections</concept_desc>
	<concept_significance>500</concept_significance>
	</concept>
	<concept>
	<concept_id>10002978.10002979</concept_id>
	<concept_desc>Security and privacy~Cryptography</concept_desc>
	<concept_significance>500</concept_significance>
	</concept>
	<concept>
	<concept_id>10010147.10010257</concept_id>
	<concept_desc>Computing methodologies~Machine learning</concept_desc>
	<concept_significance>500</concept_significance>
	</concept>
	</ccs2012>
\end{CCSXML}

\ccsdesc[500]{Security and privacy~Privacy protections}
\ccsdesc[500]{Security and privacy~Cryptography}
\ccsdesc[500]{Computing methodologies~Machine learning}

\keywords{Sparse data, Matrix multiplication, Cryptography, Multi-Party Computation, Privacy-Preserving Machine Learning}


\maketitle
\section{Introduction}
\label{sec:intro}
MPC protocols are cryptographic primitives allowing to compute functions on secret inputs.
They have been applied in many contexts involving privacy concerns such as privacy-preserving ML (PPML) \cite{mohassel_secureml_2017} or telemetry \cite{asharov_efficient_2022}.
However, existing protocols remain impractical for many real-world applications, especially when the applications involve high-dimensional \emph{sparse data}, i.e., data with a large majority of zero values.

Sparse data arises naturally in many applications including recommender systems, genomics, and natural language processing.
For instance, in a video recommender system, each user watches only a tiny fraction of the catalog.
We can encode a user’s history as a binary vector, with 1 indicating a watched video and 0 otherwise, yielding very long vectors that are mostly zeros.
Real-world datasets can be extremely sparse \cite{li_convergence_2017}: about 99\% zeros in Netflix dataset, 99.9\% in Tox21, 99.99\% in Bookcrossing, and 99.999\% in Flickr.
At such sparsity rates, specialized algorithms are essential; otherwise, computations become inefficient, or even infeasible if the dataset cannot fit in memory using dense storage.

Storing large sparse datasets in a dense format (one value per cell) demands excessive memory.
On real-world sparse data, this ``dense storage'' often becomes prohibitive; making computation infeasible due to memory constraints.
Moreover, dense representations lead to inefficient linear algebra operations, as most of the computational effort is wasted processing zeros.
Over the last 50 years, algorithms to process plaintext sparse data efficiently have been proposed \cite{buluc_implementing_2011,duff_survey_1977}.
Since 2002, dedicated software libraries \cite{duff_overview_2002} popularized them.
Sparse data is present in many ML applications but also in other fields, including graph algorithms.
Our paper focuses on PPML applications, but our results apply to any application involving private sparse data.

Several MPC frameworks enable training ML models on secret-shared data \cite{koti_swift_2021,lu_aegis_2023,mohassel_aby3_2018,mohassel_secureml_2017,patra_blaze_2020,wagh_securenn_2019,wagh_falcon_2021}.
However, \textbf{none of these frameworks provide algorithms optimized for sparse data}.

The scalability limitations (especially memory issues) of dense plaintext algorithms transfer naturally to dense MPC algorithms.
Thus,  \textbf{MPC frameworks need dedicated sparse algorithms} for ML applications like recommender systems.
These frameworks are not fundamentally incompatible with sparse data; there is simply no suitable secure sparse algorithm in the literature.

\paragraph{Related works}
There exist protocols to sum sparse vectors securely, including sorting-based \cite{asharov_efficient_2022,bell_distributed_2022}, and shuffling-based \cite{xu_harnessing_2025}.

A few papers \cite{chen_when_2021,cui_exploiting_2021,schoppmann_make_2019} studied secure sparse multiplications in a two-party setting.
However, these protocols require the data owners to actively participate because one of the computation parties must \emph{know the plaintext sparse data}.
Hence, they do not support more than two data owners simultaneously, which is too limiting, as modern ML applications involve thousands of data owners.

Typical ML applications require a more generic MPC setup: outsourced training.
In this setting, the data owners share their secret data with a group of computation servers and disconnect.
This distinction between input parties and computation parties enables supporting an arbitrary number of data owners, contrary to existing works \cite{chen_when_2021,cui_exploiting_2021,schoppmann_make_2019}.
Appendix \ref{app:adapt-non-outsourced} details why the adaptation of these works either results in an inefficient construction or is not straightforward.
The secure multiplication of sparse matrices in the outsourced setting is then \textbf{a significant open problem}.
\emph{Our work will focus on this outsourced setting; providing sparse multiplications compatible with any MPC framework}.

Some information-theory works \cite{egger_sparse_2023,xhemrishi_distributed_2022} studied the multiplication of a secret-shared matrix with a public matrix.
Due to the public input, their setting differs from classic MPC where all inputs are secret.
Moreover, Yu et al. \cite{yu_lodia_2025} has recently proposed a sparse matrix-vector multiplication based on fully homomorphic encryption (FHE).
While they highlight the importance of secure sparse multiplications, we cannot compared directly to them as their FHE threat model (single dishonest server) differs from our MPC-based one (group of non-colluding servers).
Finally, Nayak et al. \cite{nayak_graphsc_2015} presented secure graph computations (a problem closely related to sparse operations due to the sparse adjacency matrices), but they do not present multiplication protocols.

\paragraph{Our contributions}
We present \textbf{two secure algorithms} to compute matrix multiplications on secret-shared sparse data.
Our algorithms \textbf{avoid the memory issues} of dense matrix multiplications on sparse data (e.g., in an experiment from Sec. \ref{sec:exp_analysis}: 19TB using dense multiplications vs. 60GB using ours).
Moreover, our algorithms \textbf{reduce the communication costs} up to a factor 1000, and we implement two ML applications that are impractical with existing secure algorithms.
Third, as we highlight that any efficient MPC protocol on sparse data (also acknowledged in \cite{schoppmann_make_2019}) requires public knowledge about the sparsity, we leverage the properties of real-world sparse data to \textbf{minimize this mandatory public knowledge} and obtain it in a privacy-preserving manner.

\paragraph{Notations}
Let $\nnz(x)$ (resp. $\nnz(X)$) refer to the number of non-zero values in a vector $x$ (resp. matrix $X$).
Let $\share{a}$ refer to a share of a secret value $a$.
In our algorithms, any operation involving shared values corresponds to a secure variant; e.g., $\share{a}+\share{b}$ is the secure addition of $a$ and $b$ and $\osort{l}$ is the oblivious sort of $l$.
Secure algorithms are executed jointly by all shareholders.

\section{Preliminaries}
\subsection{Threat model}
\label{subsec:threat_model}
Threat modeling describes the expected behavior of the agents involved in an MPC protocol: an \emph{honest} agent follows the protocol, an \emph{honest-but-curious} agent follows the protocol and attempts to infer private information passively, and a \emph{malicious} agent does not necessarily follow the protocol and can actively disrupt the protocol to obtain private information.
MPC works distinguish security against a majority of honest agents from security against a majority of dishonest agents.
``Dishonest agent'' is an umbrella term encompassing malicious and honest-but-curious agents (depending on the security setup considered).

We analyze our algorithms under honest and dishonest majority.
We focus on honest-but-curious security, but our algorithms rely on oblivious shuffling and sorting with known maliciously secure variants \cite{hamada_oblivious_2014,laur_round-efficient_2011}.
We leave for future work the extension to malicious security and the subsequent security proofs.

In line with related works, our threat model excludes poisoning attacks where malicious data owners forge input data so the protocol reveal private information.
We assume the data owners provide well-formed secret-shared inputs.

\subsection{Outsourced setting}
Our contribution focuses on an ``\emph{outsourced}'' computation setup (popular in recent PPML works \cite{hegde_sok_2021,mohassel_secureml_2017,wagh_securenn_2019,wagh_falcon_2021}): each data owner shares their secret data with a group of computation servers.
These servers build a secret-shared matrix $\share{X}$ containing all data owners' inputs; e.g., one row per data owner -- referred to as ``horizontal'' data partitioning in PPML literature.
Then, they can perform secure computations on this matrix.
For example, the servers can execute a protocol to train an ML model $\share{\theta}$ on the dataset $X$.
In this setting, the data owners share their secret data and disconnect.
Outsourced protocols support an arbitrary number of data owners because they separate input parties (i.e., data owners) from computation parties.


Contrary to many PPML works that support a two-party \cite{mohassel_secureml_2017,hegde_sok_2021} or three-party \cite{wagh_securenn_2019,wagh_falcon_2021} outsourced setups, our work support an unbounded number of computation parties.
Nevertheless, secret-sharing protocols induce a quadratic communication cost in the number of parties.
This quadratic cost makes outsourcing necessary to support thousands of data owners like in ML applications.

All existing secure sparse multiplications \cite{chen_when_2021,cui_exploiting_2021,schoppmann_make_2019} support a non-outsourced setting.
They presented \emph{two-party protocols} assuming \emph{the computation parties are also data owners} ($\Rightarrow$ two data owners maximum) because they require one of the computation parties to know the plaintext sparse data.
Such a setting is too limiting for ML applications as they typically involve many data owners.

\textbf{Adapting existing protocols to avoid designing new ones?}
The adaptation is not easy because a plaintext knowledge is often essential to these protocols.
\textbf{Appendix \ref{app:adapt-non-outsourced}} details why the adaptation of  \cite{chen_when_2021,cui_exploiting_2021,schoppmann_make_2019} to the outsourced setting either results in an inefficient construction or is not straightforward.

\subsection{MPC primitives}

\textbf{Secret sharing} \cite{shamir_how_1979} enables sharing a secret value $x$ between multiple parties.
Each shareholder $j$ holds a share $\share{x}_j$, revealing \emph{nothing} about $x$.
The shareholders can reconstruct the secret value by gathering their shares.
Several secret sharing schemes exist, e.g., additive shares are finite field elements drawn randomly, s.t. $\sum_j [\![ x ]\!]_j = x$.
Many protocols \cite{evans_pragmatic_2018} enable processing secret-shared values; from arithmetic operations to complex operations (e.g., sorting).

Our algorithms rely notably on \textbf{oblivious shuffling and sorting}.
``Oblivious sorting'' designates MPC protocols for sorting secret-shared values without revealing information about the values.
Sorting networks \cite{hamada_oblivious_2014} are straightforward oblivious sorting solutions, as their execution is data-independent.
Their cost is $O(m\log^2 m)$ for a list of size $m$.
The oblivious radix sort \cite{hamada_oblivious_2014} is the state-of-the-art honest-majority protocol, with its optimal complexity of $O(m\log m)$ (and low constant factors).
Recent papers optimized this protocol in a three-party setting \cite{araki_secure_2021,asharov_efficient_2022}.

``Oblivious shuffling'' protocols randomly permute lists of secret values without revealing the permutation.
Laur et al. \cite{laur_round-efficient_2011} proposed an honest-majority protocol with linear costs, used in many recent protocols \cite{araki_secure_2021,asharov_efficient_2022,hamada_oblivious_2014}.
A recent work \cite{lu_rpm_2023} also described dishonest-majority protocols, but with supra-linear costs.

\subsection{Matrix multiplication}
\label{subsec:mat_mult}

Let $Z$ be the result of the multiplication between the $n\times m$ matrix $X$ and the $m\times p$ matrix $Y$.
$Z$ is an $n\times p$ matrix defined as follows:
\begin{equation}
	\label{eq:mat_mult}
	Z_{ij} = \sum_{k=1}^m x_{ik}\cdot y_{kj}, \forall i\in\{1\ldots n\}, j\in\{1\ldots p\}
\end{equation}

This formula holds for all matrices.
However, the implementation may differ depending on the type of matrices (e.g., vectors, diagonal matrices, or triangular matrices).
Our work focuses on optimized algorithms for sparse matrices.
Sparse algorithms avoid ``useless'' multiplications and only multiply the non-zero values.
Indeed, if the matrix multiplication is implemented naively (i.e., triple nested loop), the algorithm would compute many scalar multiplications with a null result.
Hence, sparse algorithms leverage the matrix sparsity to have a cost inverse proportional to the sparsity.

Software libraries usually implement dense matrix multiplication via a triple nested loop, resulting in an $O(nmp)$ cost.
While this cost has long been considered the lower bound for dense matrix multiplication, several works described algorithms with a sub-cubic cost \cite{coppersmith_matrix_1987}.
Dumas et al. \cite{dumas_secure_2019} described an MPC protocol to execute a matrix multiplication with sub-cubic cost.
These algorithms have constant factors limiting their practical interest (\textit{even in plaintext}), so software libraries favor algorithms with a cubic cost.

\subsection{Dense matrix multiplication on secret-shared data}
\label{subsec:dense_baselines}

The cubic-cost matrix multiplication (i.e., the naive implementation of Equation \eqref{eq:mat_mult}) can easily be imported to MPC.
However, recent works \cite{chen_maliciously_2020,mohassel_aby3_2018,mohassel_secureml_2017,mono_implementing_2023,patra_blaze_2020,wagh_falcon_2021} described more efficient protocols under honest or dishonest majority assumptions.
These works reduce the communication costs, but the computation cost remains asymptotically cubic.
Moreover, the storage costs remain asymptotically equivalent to the plaintext algorithm, so they cannot support larger matrices than the plaintext dense algorithm.

\paragraph{Dishonest majority} Beaver's multiplication triplets are a popular building block to speed up secure multiplications via preprocessing steps \cite{evans_pragmatic_2018}.
Mohassel and Zhang \cite{mohassel_secureml_2017} generalized this concept to matrices: matrix multiplication triplets.
These triplets make the communication costs proportional to the input and output size.
For example, the multiplication of an $n\times m$ matrix with an $m\times p$ matrix requires $O(nm + mp + np)$ communications (instead of $O(nmp)$ with the naive algorithm).
This approach reduces the asymptotic cost for matrix-matrix multiplications, but the communication costs of the vector-vector and matrix-vector multiplications remains asymptotically equal to the naive costs.
Matrix multiplication triplets with malicious security were later described \cite{chen_maliciously_2020,mono_implementing_2023}.

\paragraph{Honest majority} Using the honest majority assumption, several recent works \cite{koti_swift_2021,lu_aegis_2023,mohassel_aby3_2018,patra_blaze_2020,wagh_falcon_2021} designed protocols with a communication cost proportional to the output size.
Hence, a square matrix multiplication only requires $O(n^2)$ communications, and a vector multiplication $O(1)$ communications.
Before these recent protocols, De Hoogh \cite{hoogh_design_2012} had already described a matrix multiplication with equivalent costs based on Shamir's secret sharing (SSS).
The recent protocols \cite{koti_swift_2021,lu_aegis_2023,mohassel_aby3_2018,wagh_falcon_2021} are more complex to understand than the SSS-based algorithm of \cite{hoogh_design_2012} because they rely on share conversion protocols.
Moreover, these protocols only support 3 parties, while the SSS-based solution supports an arbitrary number of parties.
Nevertheless, the SSS-based solution only has honest-but-curious security, while \cite{koti_swift_2021,lu_aegis_2023,patra_blaze_2020,wagh_falcon_2021} support malicious security.

To understand the gain provided by the honest-majority assumption, let us detail the optimized vector multiplication under SSS \cite{hoogh_design_2012}.
In SSS, each secret-shared value is represented using a polynomial of degree $t$.
To multiply two secret-shared scalars in SSS, the parties locally multiply their local shares (obtaining a polynomial of degree $2t$) and then execute a degree reduction protocol (via a ``resharing'' protocol).
For dense vector multiplication, we can delay the degree reduction and add all local multiplication results together.
The addition is possible because all local multiplication results are shares of equal polynomial degree (i.e., $2t$).
Then, a single degree reduction is done on the sum result.

\section{Sparse data in PPML}
\label{sec:sparse-data-ppml}
\subsection{Sparse data representation}
The term ``sparse data'' refers to data containing a large proportion of zero values.
If stored using the default matrix format, the zeros occupy a significant memory space.
When the data contains a sufficient number of zeros, sparse data representations can significantly reduce the storage space.

Several sparse data representations exist \cite{buluc_implementing_2011}, each with specific algorithms to perform operations on the data.
Depending on the data representations, some operations are more efficient than others.
In other words, there is no perfect sparse data representation.

We use the \textbf{tuple representation} \cite{buluc_implementing_2011}: each sparse vector is a list of non-zero tuples: $t = (i, v_i)$ with $i$ the coordinate in the vector and $v_i$ the non-zero value.
Let $\coord(t)=i$ denote the coordinate of the non-zero and $\val(t)=v_i$ its value.
This format is already used in plaintext operations, like in the SciPy library in which it is called COO format (i.e., COOrdinate format).

Other sparse data formats like the Compressed Sparse Row format are also used in plaintext.
However, they require look-up operations which are particularly expensive in MPC.
Thus, our work focused on the tuple representation.

We refer to algorithms supporting sparse data as ``sparse algorithm'', a common abuse of terminology.
We use the term ``dense algorithms'' to refer to classic algorithms.

\subsection{Larger datasets are sparser}
\label{subsec:larger-sparse}
\paragraph{Optimizing dense algorithms, a dead end}
To process sparse data, one may consider to use dense matrix multiplication, with more memory.
Unfortunately, it would require tremendous amount of memory.
Secure dense algorithms have the same asymptotic memory footprint and hence
run into the same memory issue.
There is a limit to the scalability of dense algorithms.
Even if extreme amounts of memory would be available, dense secure multiplications would still fail to scale because their communication cost is typically at least linear in the size of the dense input.

A workaround to avoid this memory issue could be to partition the matrix in sub-matrices and perform dense multiplications on these sub-matrices.
This approach is effective in classic linear algebra to handle large dense datasets and could (to some extent) reduce memory consumption.
However, it would still require storing the product as a dense matrix; the memory benefit will not be significant.
Moreover, it adds significant costs for repeated sparse-to-dense conversions.
Finally, this partitioning approach would not reduce the communication costs of MPC protocols that would remain correlated to the full matrix size.

To sum up, sparse algorithms are popular in plaintext because there is no other efficient solutions to process high-dimensional sparse datasets.
In plaintext, dense matrix multiplications usually have a sparse equivalent (e.g., NumPy for dense and SciPy for sparse in Python).
To cover all ML applications, MPC frameworks should similarly have sparse algorithms.
The first motivation of sparse algorithms is to avoid the memory issues caused by dense algorithms.
Furthermore, they often provide significant performance gains.

\paragraph{Sparsity-size correlation}
Real-world sparse datasets show an interesting phenomenon: the sparsity is correlated to the matrix size.
Indeed, larger sparse datasets have in average a larger sparsity.
Thus, \textbf{the larger datasets are, the more beneficial the sparse algorithms become.}

To understand this phenomenon, let us take the example of recommender systems on a marketplace (e.g., Amazon): the matrix quantifies the interaction of each consumer with each product.
The dataset is sparse because each consumer only interacts with a small subset of all possible products.
If the marketplace increases its number of products by a factor 1000, the consumer will not consume 1000 times more products.
Even if the consumption increases a bit, existing datasets show that it does not follow the matrix growth; i.e., the sparsity increases.
The real-world datasets that we arbitrarily selected from \cite{li_convergence_2017} satisfy the same trend: MovieLens 1M has 95\% sparsity for 1.7K columns, Yahoo Movie has 99.7\% sparsity for 12K columns, Bookcrossing has 99.99\% sparsity for 340K columns, and flicker has 99.999\% sparsity for 500K columns.

\subsection{Public knowledge in secure sparse multiplications}
\label{subsec:prior-knowledge}
Other secure sparse algorithms \cite{chen_when_2021,cui_exploiting_2021,schoppmann_make_2019} assume some public knowledge.
For example, Schoppmann et al. \cite{schoppmann_make_2019} assumes ``that the sparsity metric of the input is public.''
The definition of this ``sparsity metric'' varies in each of their algorithm: the total number of non-zeros, or the number of non-zero rows or columns.
In our paper, this public ``sparsity metric'' is the number of non-zeros per row.

Contrary to \cite{schoppmann_make_2019}, papers like \cite{yu_lodia_2025} did not identify their need for public knowledge, but it is always present.
For example, in their FHE-based sparse multiplication, Yu et al. \cite{yu_lodia_2025} split a large sparse matrix into several smaller matrices; the number and structure of the smaller matrices indirectly reveal the number of non-zeros.

Public knowledge could be interpreted as a limiting assumption, but \textbf{public knowledge is mandatory to design efficient and secure sparse algorithms}.
In sparse algorithms, the performance gains (in memory, communication, and computation) are correlated to the sparsity.
Thus, it is necessary to know the sparsity to expect any performance gain.

Anyway, \textbf{sparsity knowledge is often necessary to ML practitioners}.
ML applications can be seen as a pipeline: exploratory data analysis (to choose an appropriate ML model), ML training, and model deployment.
Data analysis on sparse dataset necessarily includes a sparsity estimation; often essential to choose an appropriate model.
Hence, the exploratory data analysis would provide this public knowledge, before the sparse matrix multiplications.

Finally, Sec. \ref{sec:minimize-prior-knowledge} proposes some \textbf{protocols to minimize the public knowledge} and Sec. \ref{sec:private-prior-knowledge} proposes protocols to obtain the minimized knowledge in a privacy-preserving manner.

\section{Secure sparse multiplications}
\label{sec:algo}
This section presents two secure sparse matrix multiplications: matrix-vector and matrix-matrix.
Existing secure sparse multiplications are incompatible with the outsourced setting, so we can only compare to the dense baselines.
We assume the bit length and number of computation parties are $O(1)$, so our asymptotic complexities are only in function of the matrix size.

Our algorithms rely on secure additions, multiplications, comparisons, oblivious shuffling, and oblivious sorting \cite{hamada_oblivious_2014}.
On the one hand, sorting-network-based oblivious sorting only requires secure additions, multiplications, and comparisons.
On the other hand, we can implement oblivious shuffling using oblivious sorting: generate a random secret-shared value for each element in the input list and obliviously sort the list based on these random values.
Our algorithms are then \textbf{compatible with all secret-sharing schemes} supporting basic arithmetic and comparison operations.
However, for optimal performances, one should implement our algorithms using honest-majority sorting/shuffling \cite{hamada_oblivious_2014,laur_round-efficient_2011}.

Appendix \ref{app:sec_proof} sketches a security proof for our algorithms.

\subsection{Toy protocol: vector-vector}
We propose to start with a simple vector multiplication.
Since dense vector multiplications have an $O(1)$ communication cost (under honest majority), we know in advance that it is unbeatable.
Modern ML applications essentially involve matrix-vector and matrix-matrix multiplications, so vector multiplication is not a relevant focus.
However, this \emph{toy protocol} provides simple intuitions to understand how to build sorting-based sparse multiplications.

We want to multiply two $n$-dimensional sparse vectors $x$ and $y$, represented as tuple lists.
Alg. \ref{alg:sparse_vect_mult} presents our solution.
First, we concatenate the two tuple lists.
Note that this concatenation is trivial (and) secure because the sparse vectors $x$ and $y$ are lists of shares.
Thus, the concatenation requires no communication; the shareholders simply concatenate locally their list of shares.
Then, we sort (\emph{obliviously}) the resulting list by coordinate.
If two consecutive tuples have equal coordinates, we multiply their values and sum all the multiplication results to build the vector multiplication value.

\begin{algorithm}[t]
	\caption{Secure sparse vector multiplication \label{alg:sparse_vect_mult}}

	\algrenewcommand\algorithmicif{$[\![\text{\textbf{obliv. if}}]\!]$ }
	\begin{algorithmic}[1]
		\Input{$\share{x}$ and $\share{y}$ are two secret-shared sparse vectors}
		\PublicKnowledge{Number of non-zero elements per vector}
		\Function{SparseVectorMult}{$\share{x},\share{y}$}
		\Let{$\share{z}$}{$\text{Concatenate}(\share{x},\share{y})$}
		\Let{$\share{z}$}{$\osort{z}$}
		\Comment{Asc. sort on the non-zero coordinate.}
		\Let{$\share{s}$}{0}
		\For{$i \gets 1 \textrm{ to } \nnz(x) + \nnz(y) -1$}
		\If{$\share{\coord(z_i)} = \share{\coord(z_{i+1})}$} \label{lst:obliv_cond}
		\Let{$\share{s}$}{$\share{s} + \share{\val(z_i)}\times\share{\val(z_{i+1})}$}
		\EndIf\label{lst:obliv_cond_end}
		\EndFor
		\State \Return{$\share{s}$} \Comment{a scalar}
		\EndFunction
	\end{algorithmic}
\end{algorithm}

We use the notation $[\![\text{\textbf{obliv. if}}]\!]$ to describe conditional structures on secret-shared values.
The presence of conditional branches may seem contrary to the obliviousness goal.
Indeed, revealing the executed branch would leak information about the secret-shared value.
However, conditional structures on secret-shared values are implemented so that the executed branch remains secret.
All the $[\![\text{\textbf{obliv. if}}]\!]$ instructions are implemented via a combination of arithmetic and comparison operations.
For example, we implement the lines \ref{lst:obliv_cond} to \ref{lst:obliv_cond_end} of Alg. \ref{alg:sparse_vect_mult} using the following expression: $\share{s}\leftarrow\share{s} + (\share{\coord(z_i)} = \share{\coord(z_{i+1})})\times\share{\val(z_i)}\times\share{\val(z_{i+1})}$; with the secure comparison $\share{\coord(z_i)} = \share{\coord(z_{i+1})}$ outputting $\share{1}$ or $\share{0}$.
Such instructions are less readable, so we keep conditional structures in our algorithms with a dedicated notation.
Such conditional statements are recurrent in the literature: e.g., \cite{asharov_efficient_2022} used conditional statements in their heavy hitters protocol.

Alg. \ref{alg:sparse_vect_mult} costs $O((\nnz(x) + \nnz(y))\log(\nnz(x) + \nnz(y)))$ communications and computations in $O(\log(\nnz(x) + \nnz(y)))$ rounds.
The sorting step induces the superlinear complexity.
The multiplication loop is parallelizable, so it only costs $O(\nnz(x) + \nnz(y))$ in $O(1)$ rounds.
In comparison, the dense multiplication has $O(1)$ communication costs under honest majority, and an $O(n\log n)$ cost under dishonest majority.
Moreover, our sparse algorithm significantly reduces the memory footprint under both threat models.

\subsection{Matrix-vector}
\label{subsec:alg-mat-vec}
We want to multiply an $n\times m$ matrix $X$ to an $m$-dimensional vector $y$.
Our work assumes the public knowledge motivated in Sec. \ref{subsec:prior-knowledge}: per-row sparsity.
Thus, we can adapt the sparse matrix representation and group the non-zero elements by rows: a sparse matrix would be a list of sparse vectors, one per row.

An intuitive solution is to compute a vector-vector multiplication on each matrix row.
This approach is adequate for dense multiplications, but inefficient for sparse matrices.
Indeed, this naive extension induces a linear dependency on the number of rows: $O((\nnz(X) + n\cdot\nnz(y))\log(\nnz(X) + n\cdot\nnz(y)))$ communications and computations.
The term $n\cdot\nnz(y)$ comes from the ``replication'' of $y$ for each row of $X$.
The vector $y$ is sorted independently for each row, which induces this inefficiency.
To sum up, we need a dedicated sparse matrix-vector multiplication because the naive extension is inefficient.

\begin{algorithm}[t]
	\caption{Secure sparse matrix-vector multiplication \label{alg:sparse_mat_vect_mult}}
	\algrenewcommand\algorithmicif{$[\![\text{\textbf{obliv. if}}]\!]$ }

	\begin{algorithmic}[1]
		\Input{$\share{X}$ a secret-shared sparse matrix, $\share{y}$ a secret-shared sparse vector}
		\PublicKnowledge{Number of non-zeros per matrix/vector}
		\Notation{the function RowCoord() (resp. ColCoord()) returns the first/row (resp. second/column) coordinate of a non-zero tuple.}
		\Function{SparseMatVectMult}{$\share{X},\share{y}$}
		\Let{$\share{z}$}{$\share{X}$}
		\For{$(\share{j}, \share{v}) \textrm{ in } \share{y}$}
		\State{Append tuple $(\coord:(\share{\bot}, \share{j}),\val:\share{v})$ to $\share{z}$}
		\EndFor

		\Let{$\share{z}$}{$\osort{z}$}
		\Comment{Asc. sort on the 2nd and then 1st coordinates (with $\bot < 1$)}
		\Statex
		\Let{$\text{prev}Y$}{$(\coord:(\share{\bot}, \share{\bot}),\val:\share{\bot})$}
		\For{$k \gets 1 \textrm{ to } \left(\nnz(X) + \nnz(y)\right)$} \label{lst:mult_loop}
		\If{$\share{\text{ColCoord}(\text{prev}Y)} = \share{\text{ColCoord}(z_k)}$}
		\Let{$\share{\val(z_k)}$}{$\share{\val(z_k)} \times \share{\val(\text{prev}Y)}$}
		\Else
		\Let{$\share{\val(z_k)}$}{$\share{0}$}
		\EndIf
		\If{$\share{\text{RowCoord}(z_k)} = \share{\bot}$}
		\Let{$\share{\text{prev}Y}$}{$\share{z_k}$}
		\Let{$\share{\val(z_k)}$}{$\share{\bot}$}
		\EndIf
		\EndFor\label{lst:mult_loop_end}
		\Statex
		\State{Drop the second coordinate from all tuples in $\share{z}$}
		\Let{$\share{z}$}{$\osort{z}$}
		\Comment{Asc. sort on the remaining coord.}
		\State{\Call{AggEqualCoord}{$\share{z}$}} \label{lst:agg_step_alg_mat_vect}
		\State{\Call{PlaceholderRemoval}{$\share{z}$}}
		\State
		\Return{$\share{z}$}
		\Comment{sparse vector stored as a list of non-zeros}
		\EndFunction
	\end{algorithmic}
\end{algorithm}

Alg. \ref{alg:sparse_mat_vect_mult} presents a more optimized solution avoiding the linear dependency on $n$.
The intuition behind this optimization is to build a tuple list sorted as follows $[y_{j}, x_{i_1,j}, x_{i_2,j}, \ldots y_{j'}, x_{i_3,j'} \ldots]$.
In this list, we group the elements related to the same column, starting with the element from vector $y$.
We then fix $y_j$, iterate over all $x_{*,j}$, and multiply it with $y_j$ until we reach the next $y_{j'}$.

\begin{algorithm}[t]
	\caption{Aggregation of tuples with equal coordinates\label{alg:naive_aggregate}}
	\algrenewcommand\algorithmicif{$[\![\text{\textbf{obliv. if}}]\!]$ }

	\begin{algorithmic}[1]
		\Input{$\share{Z}$, a sorted list of secret shared tuples}
		\Procedure{AggEqualCoord}{$\share{Z}$}
		\For{$k \gets 1 \textrm{ to } \text{length}(\share{Z}) - 1$}
		\If{$\share{\coord(Z_k)} =\share{\coord(Z_{k+1})}$}
		\Let{$\share{\coord(Z_k)}$}{$\share{\bot}$}
		\Let{$\share{\val(Z_{k+1})}$}{$\share{\val(Z_{k+1})} + \share{\val(Z_k)}$}
		\EndIf
		\EndFor
		\EndProcedure
	\end{algorithmic}
\end{algorithm}

\begin{algorithm}[t]
	\caption{Placeholder removal\label{alg:placeholder_removal}}
	\begin{algorithmic}[1]
		\Input{$\share{Z}$, a list of secret shared tuples}
		\Procedure{PlaceholderRemoval}{$\share{Z}$}
		\Let{$\share{Z}$}{$\oshuffle{Z}$}
		\For{$k \gets 1 \textrm{ to } \text{length}(\share{Z})$}
		\Let{$\text{is\_placeholder}$}{\Call{reveal}{$\share{\coord(Z_k)} = \bot$}}
		\If{$\text{is\_placeholder}$}
		\State{Remove the $k$-th tuple from $\share{Z}$}
		\EndIf
		\EndFor
		\EndProcedure
	\end{algorithmic}
\end{algorithm}

Once we finish the vector component multiplications, we sort the multiplication results according to the first coordinate to aggregate all the results with equal coordinates and build the result.
Every time two values are aggregated, the first value is replaced by a placeholder.
Alg. \ref{alg:naive_aggregate} details this aggregation step.

\begin{table}[t]
	\centering
	\resizebox{\columnwidth}{!}{
		\begin{tabular}{|c|c||c|c|c|}
			\hline
			Majority                   & Algorithm                                                                                         & Comm.                                             & Comp.       & Storage \\\hline
			\multirow{2}{*}{Dishonest} & Dense  \cite{chen_maliciously_2020,mohassel_secureml_2017,mono_implementing_2023}                 & \multicolumn{2}{|c|}{$O(nm)$}                     & $O(nm)$               \\\cline{2-5}
			                           & Sparse (\textbf{ours}: Alg. \ref{alg:sparse_mat_vect_mult})                                       & \multicolumn{2}{c|}{$O(\nnz_*\cdot\log(\nnz_*))$} & $O(\nnz_*)$           \\\hline
			\multirow{2}{*}{Honest}    & Dense  \cite{hoogh_design_2012,koti_swift_2021,lu_aegis_2023,mohassel_aby3_2018,wagh_falcon_2021} & $O(n)$                                            & $O(nm)$     & $O(nm)$ \\\cline{2-5}
			                           & Sparse (\textbf{ours}: Alg. \ref{alg:sparse_mat_vect_mult})                                       & \multicolumn{2}{c|}{$O(\nnz_*\cdot\log(\nnz_*))$} & $O(\nnz_*)$           \\\hline
		\end{tabular}}
	\caption{Complexities of the sparse and dense matrix-vector multiplications. Notations: $\nnz_*=\nnz(X) + \nnz(y), \nnz(X)\ll n\times m, \nnz(y)\ll m$.}
	\label{tab:mat_vect_mult}
\end{table}

After the aggregation (Line \ref{lst:agg_step_alg_mat_vect} of Alg. \ref{alg:sparse_mat_vect_mult}), we remove the placeholders using Alg. \ref{alg:placeholder_removal}.
To avoid revealing information about the non-zero coordinates, Alg. \ref{alg:placeholder_removal} uses the ``shuffle-and-reveal'' trick frequently used in secure sparse sum \cite{asharov_efficient_2022,bell_distributed_2022}.

Alg. \ref{alg:sparse_mat_vect_mult} costs $O((\nnz(X) + \nnz(y))\log(\nnz(X) + \nnz(y)))$ communications and computations, but it has a linear round complexity due to the multiplication loop (lines \ref{lst:mult_loop} to \ref{lst:mult_loop_end} of Alg. \ref{alg:sparse_mat_vect_mult}) and the aggregation step (Alg. \ref{alg:naive_aggregate}).
Appendix \ref{app:linear_rounds} also presents recursive algorithms taking inspiration from the logarithmic-round secure maximum (notably implemented in MPyC \cite{schoenmakers_mpycpython_2018}) to build a multiplication loop and an aggregation step with logarithmic round complexity.

Table \ref{tab:mat_vect_mult} compares the complexities of our sparse algorithm to the dense baselines.
On the one hand, honest-majority dense multiplication is no longer constant in communications, so our algorithm is advantageous for highly sparse matrices.
On the other hand, the memory footprint is more critical in this operation.
Several real-world sparse matrices do not fit in memory using a dense format (see example application in Sec. \ref{sec:exp_analysis}), even in plaintext!
\textbf{Our sparse algorithm is then mandatory when memory overflowing prevents from using dense algorithms.}

\subsection{Matrix-matrix}
\label{subsec:mat-mat}

We want to multiply two sparse matrices.
Such multiplications are used in ML algorithms for specific computations, especially correlation matrices (i.e., $X^{\top}X$).
Other ML algorithms also require sparse-dense matrix multiplication (e.g., to multiply a sparse input matrix with a dense parameter matrix).
Even in plaintext, sparse-dense multiplications require specific algorithms differing from those of sparse-sparse multiplications (i.e., the focus of our paper).
Thus, we leave sparse-dense multiplications for future works.
This subsection then emphasizes the computation $X^{\top}X$, which is a key sparse-sparse operation in ML.

\begin{algorithm}[t]
	\caption{Secure sparse matrix multiplication \label{alg:sparse_mat_mult}}

	\algrenewcommand\algorithmicif{$[\![\text{\textbf{obliv. if}}]\!]$ }
	\begin{algorithmic}[1]
		\Input{$\share{X}$ and $\share{Y}$ are two secret-shared sparse matrices}
		\PublicKnowledge{Number of non-zeros per column in $X$ (with $X^{(k)}$ a column), number of non-zeros per row in $Y$ (with $Y^{\{k\}}$ a row)}
		\Function{SparseMatMult}{$\share{X},\share{Y}$}
		\Let{$\share{Z}$}{$\emptyset$}
		\For{$k \gets 1 \textrm{ to } m$}
		\For{$i \gets 1 \textrm{ to } \text{length}(\share{X^{(k)}})$}
		\For{$j \gets 1 \textrm{ to } \text{length}(\share{Y^{\{k\}}})$}
		\Let{$\share{r_i}$}{$\share{\coord(X^{(k)}_i)}$}
		\Comment{Row coordinate}
		\Let{$\share{r_j}$}{$\share{\coord(Y^{\{k\}}_j)}$}
		\Comment{Column coordinate}
		\Let{$\share{r_v}$}{$\share{\val(X^{(k)}_i)}\times\share{\val(Y^{\{k\}}_j)}$}
		\State{Append a tuple $((\share{r_i},\share{r_j}), \share{r_v})$ to $\share{Z}$}
		\State{// Non-zero tuple: coordinates then value}
		\EndFor
		\EndFor
		\EndFor
		\Statex
		\Let{$\share{Z}$}{$\osort{Z}$}
		\Comment{Asc. sort on the coordinates.}
		\State{\Call{AggEqualCoord}{$\share{Z}$}}
		\State{\Call{PlaceholderRemoval}{$\share{Z}$}}
		\State
		\Return{$\share{Z}$}
		\Comment{sparse matrix stored as a list of non-zeros}
		\EndFunction
	\end{algorithmic}
\end{algorithm}

To compute $X^{\top}X$, we multiply an $m\times n$ sparse matrix to an $n\times m$ matrix.
With our public knowledge, we know the per-column sparsity of the first matrix (because $X$ is transposed), and per-row sparsity of the second.
Alg. \ref{alg:sparse_mat_mult} generalizes this problem: multiplying an $n\times m$ sparse matrix $X$ to an $m\times p$ sparse matrix $Y$, with per-column sparsity knowledge for $X$ and a per-row knowledge for $Y$.
We use the notation $X^{(k)}$ (resp. $Y^{\{k\}}$) to refer to the $k$-th column (resp. row) of $X$ (resp. $Y$) stored as a sparse vector.

When implementing a matrix-matrix multiplication naively, it is common to design a triple nested loop multiplying: each row $i$ of $X$ is multiplied to each column $j$ of $Y$.
The third loop is the vector multiplication on each $k$-th element of the vectors.
Linear algebra libraries usually leverage a simple optimization to speed up the dense matrix multiplication: swapping the loops on  $i$ and $k$.
This optimization significantly speeds up the implementation by reducing the memory access costs \cite{leiserson_introduction_2018}.

Our sparse algorithm reuses a similar intuition to design a fast algorithm and avoid performing a series of vector multiplications.
Our algorithms have two main steps: (1) compute all the individual scalar multiplications, (2) aggregate them using oblivious sorting.
In matrix-matrix multiplications, all values in the $k$-th column of $X$ must be multiplied by all values in the $k$-th row of $Y$.
Our public knowledge provides direct access to the $k$-th column of $X$ and the $k$-th row of $Y$, so we can compute the individual scalar multiplications by iterating on the non-zeros of $X^{(k)}$ and $Y^{\{k\}}$.
Alg. \ref{alg:sparse_mat_mult} reuses Alg. \ref{alg:naive_aggregate} and \ref{alg:placeholder_removal} for aggregation and placeholder removal.

\begin{table}[t]
	\centering
	\resizebox{\columnwidth}{!}{
		\begin{tabular}{|c|c||c|c|c|}
			\hline
			Majority                   & Algorithm                                                                           & Comm.                                                             & Comp.               & Storage           \\\hline
			\multirow{2}{*}{Honest}    & Dense   \cite{chen_maliciously_2020,mohassel_secureml_2017,mono_implementing_2023}  & $O(nm + mp + np)$                                                 & $O(nmp)$            & $O(nm + mp + np)$ \\\cline{2-5}
			                           & Sparse (\textbf{ours}: Alg. \ref{alg:sparse_mat_mult})                              & \multicolumn{2}{c|}{$O(\text{MinMult}\cdot\log(\text{MinMult}))$} & $O(\text{MinMult})$                     \\\hline
			\multirow{2}{*}{Dishonest} & Dense    \cite{hoogh_design_2012,lu_aegis_2023,mohassel_aby3_2018,wagh_falcon_2021} & $O(np)$                                                           & $O(nmp)$            & $O(nm + mp + np)$ \\\cline{2-5}
			                           & Sparse (\textbf{ours}: Alg. \ref{alg:sparse_mat_mult})                              & \multicolumn{2}{c|}{$O(\text{MinMult}\cdot\log(\text{MinMult}))$} & $O(\text{MinMult})$                     \\\hline
		\end{tabular}}
	\caption{Complexities of the sparse and dense matrix-matrix multiplications. Notations: $\text{MinMult} =\sum_{k=0}^m \nnz(X^{(k)})\cdot \nnz(Y^{\{k\}}), \nnz(X)\ll n\times m, \nnz(Y)\ll m\times p$.}
	\label{tab:mat_mult}
\end{table}

Table \ref{tab:mat_mult} compares our sparse algorithm to the dense algorithms.
The notation $\text{MinMult}=\sum_k \nnz(X^{(k)})\cdot \nnz(Y^{\{k\}})$ denotes the number of non-zero scalar multiplications required by the matrix multiplication.
It represents an intuitive lower bound: the best plaintext algorithm is $O(\text{MinMult})$.
Our algorithm \emph{only adds a logarithmic factor} compared to the plaintext sparse algorithm \cite{buluc_implementing_2011}.

\section{Experimental analysis}
\label{sec:exp_analysis}

This section compares the dense algorithms to our sparse algorithms under three sparsity levels: 99\%, 99.9\%, 99.99\% of zero values.
These thresholds are recurrent in sparse data applications (see real-world datasets mentioned in Sec. \ref{subsec:larger-sparse}).
The dense algorithm is, by definition, insensitive to the sparsity level, so we provide one curve for the dense algorithm and one per sparsity level for our sparse algorithm.

Our comparison focuses on communication costs and memory usage.
The communication costs corresponds to the total communication exchanged between the servers during the protocol.
The memory usage is represented thanks to the memory overflows: we scale up our experiments until the algorithms cannot be executed due to a lack of memory.
We use red crosses to represent these memory overflows on the plots.

For each multiplication type (i.e., matrix-vector and matrix-matrix), we also implement an example use case built upon it.
Each example use case relies on a real-world dataset to demonstrate the impracticality of dense algorithms on simple ML applications involving sparse data.

\subsection{Experimental setup}
We execute our experiments on a server with 188GB of RAM and an Intel Xeon CPU.
We use the MPyC framework \cite{schoenmakers_mpycpython_2018} to simulate 3-party protocols.

Our experiments simulate an honest majority because this setting provides the \textbf{best dense baselines} (see Tables \ref{tab:mat_vect_mult} and \ref{tab:mat_mult}).
More specifically, we chose the \emph{SSS-based matrix multiplication \cite{hoogh_design_2012} as dense baseline} instead of the other honest-majority protocols \cite{koti_swift_2021,lu_aegis_2023,mohassel_aby3_2018,wagh_falcon_2021} for two reasons.
First, the SSS-based matrix multiplication has smaller constant factors; e.g., it only requires the communication of a single ring element, while \cite{koti_swift_2021,lu_aegis_2023,mohassel_aby3_2018,wagh_falcon_2021} require multiple communication rounds.
Second, the SSS-based protocol supports an arbitrary number of parties (like our sparse multiplications), while \cite{koti_swift_2021,lu_aegis_2023,mohassel_aby3_2018,wagh_falcon_2021} are 3-party protocols.

We use 64-bit fixed-point numbers, with a 32-bit fractional part.
Floating-point arithmetic is still poorly scalable in MPC protocols.
MPC works usually rely on fixed-point arithmetic \cite{kelkar_secure_2022} to avoid performance bottlenecks.
The minor precision loss due to fixed-point arithmetic is often acceptable for ML applications \cite{kelkar_secure_2022}.

We use the protocol by Laur et al. \cite{laur_round-efficient_2011} for oblivious shuffling and the oblivious radix sort \cite{hamada_oblivious_2014} for sorting.
The first step of the oblivious radix sort is a secure bit decomposition.
As this operation is expensive, we accelerate the process using pre-computations.
Our algorithms assume the data owners pre-compute and share a bit decomposition of their non-zero coordinates.
Hence, they share a bit array instead of a single integer per non-zero coordinate.
This pre-computed bit decomposition creates a storage overhead proportional to the bit-length of the indices.
As large sparse matrices hardly scale above $10^8$ columns, so this factor would not exceed 30, and is compensated (as shown in our experiments) by the gains provided by sparse data representations.

Our implementation does not include the round-complexity optimizations of Appendix \ref{app:linear_rounds}. 

\subsection{Matrix-vector}
\begin{figure}[t]
	\centering
	\begin{subfigure}{0.49\linewidth}
		\includegraphics[width=\linewidth]{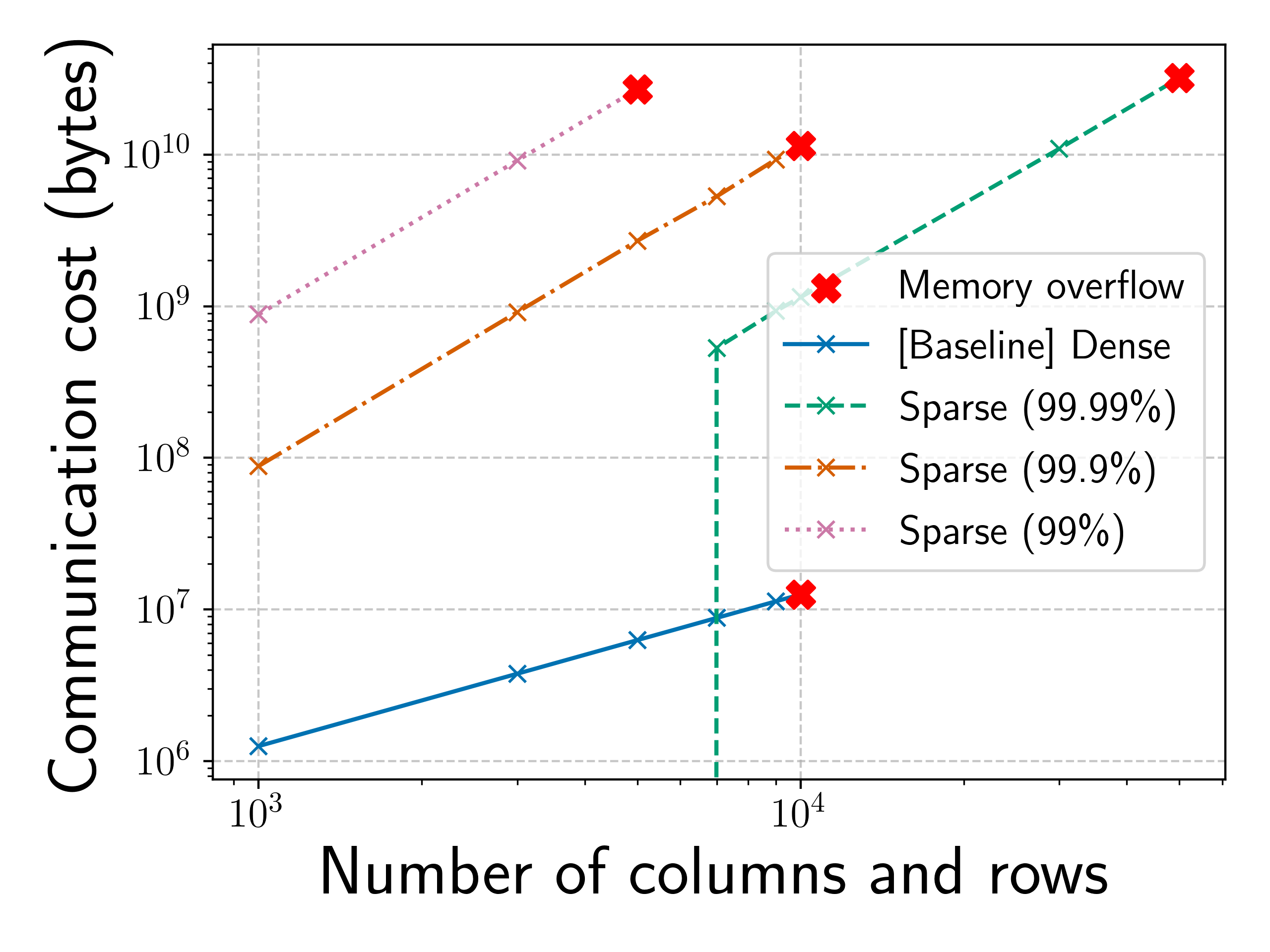}
		\Description[Plot representing the communication costs for varying square matrix size. The dense baseline outperforms the sparse algorithm in all three sparsity levels (i.e., 99\%, 99.9\% 99.99\%).]{Plot representing the communication costs for varying square matrix size from 1K to 100K. The dense baseline outperforms the sparse algorithm in all three sparsity levels (i.e., 99\%, 99.9\% 99.99\%). For 99.99\% sparsity and a matrix size of 10K, the dense algorithm is a hundred times less expensive than the sparse algorithm. The sparse algorithm cost appears to scale slightly faster than the dense algorithm costs. The dense algorithm and the sparse algorithm with 99.9\% sparsity trigger memory overflows for matrix sizes above 10K. The sparse algorithm with 99.99\% sparsity triggers a memory overflow when the size exceeds 50K.}
		\caption{Matrix-vector (with a square matrix)}
		\label{fig:benchmark_mat_vect_mult}
	\end{subfigure}
	\begin{subfigure}{0.49\linewidth}
		\includegraphics[width=\linewidth]{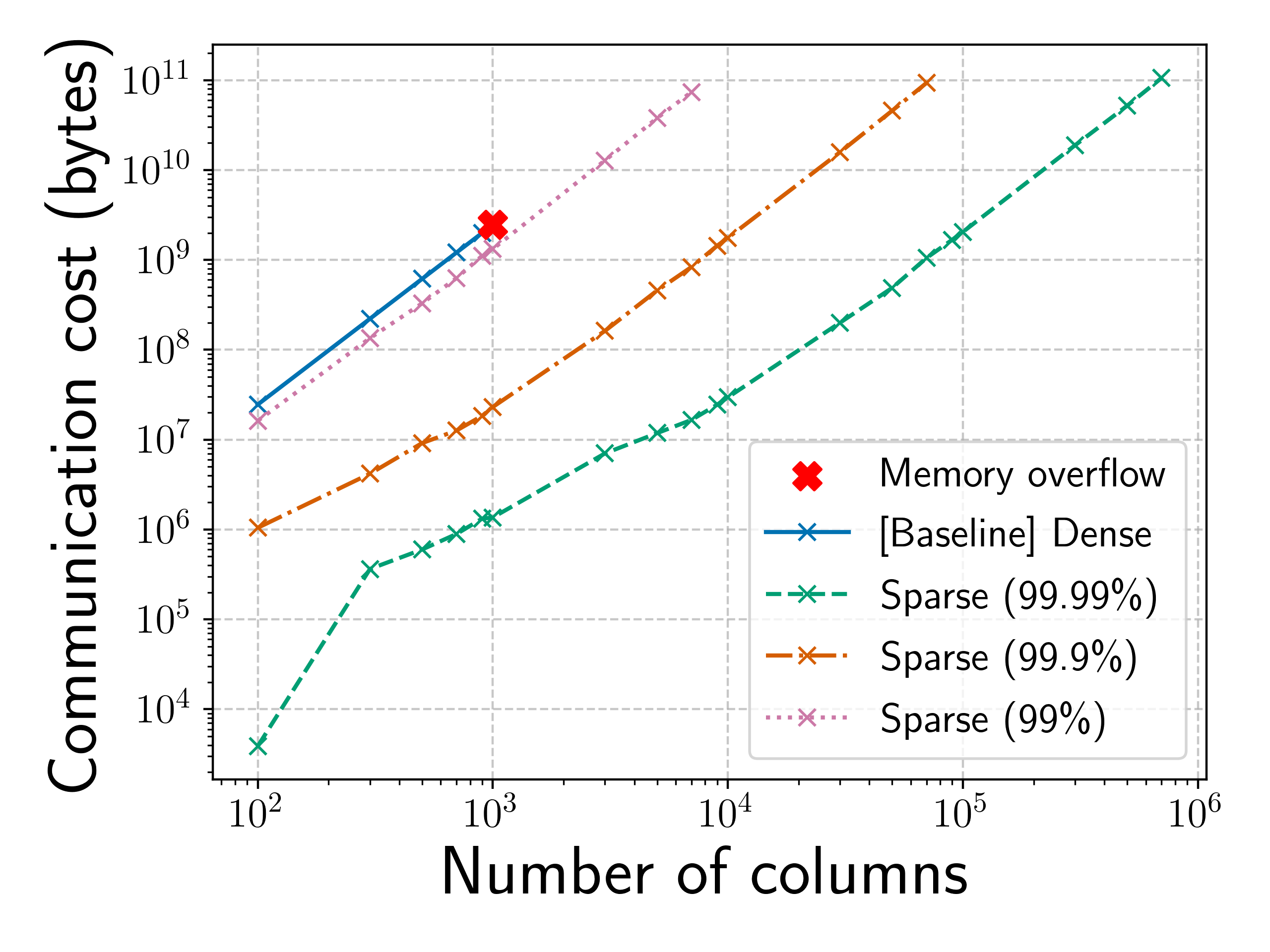}
		\Description[Plot representing the communication costs for varying numbers of columns. The dense baseline is outperformed by the sparse algorithm in all three sparsity levels (i.e., 99\%, 99.9\% 99.99\%).]{Plot representing the communication costs for varying square matrix size from 100 to 1M. The dense baseline is outperformed by the sparse algorithm in all three sparsity levels (i.e., 99\%, 99.9\% 99.99\%). For 99\% sparsity, the dense algorithm is slightly more expensive than the sparse algorithm. For 99.99\% sparsity, the dense algorithm is a thousand times more expensive than the sparse algorithm. The dense algorithm memory overflows when the number of columns exceeds 1K. The sparse algorithm with 99.99\% sparsity scales up to 1M columns without a memory overflow.}
		\caption{Matrix-matrix (operation: $X^{\top}X$).}
		\label{fig:benchmark_mat_mult}
	\end{subfigure}
	\caption{Comparison of the secure matrix multiplications under honest majority: dense algorithm vs. \emph{our sparse algorithm} with three sparsity levels.}

\end{figure}

Fig. \ref{fig:benchmark_mat_vect_mult} compares the dense and sparse matrix-vector multiplications for a square matrix.
Note that our algorithm with 99.99\% sparsity has a negligible cost when the number of rows is small because the high sparsity leads to empty sparse vectors.
Overall, the dense algorithm is less expensive than our algorithm.

\textbf{The main takeaway is the memory issue}.
Indeed, while the dense algorithm is less expensive, our algorithm supports larger matrices (i.e., the first motivation of sparse algorithms).
For example, the dense algorithm (contrary to our sparse algorithm with 99.99\% sparsity) triggers memory overflows for matrices with more than 10K columns.
As developed in Sec. \ref{subsec:larger-sparse}, there is no (dense) solution to avoid memory issues (19TB of memory would be needed).

One may argue that our algorithm also has some memory overflows.
To analyze these results, we must remind the results from Sec. \ref{subsec:larger-sparse}: larger datasets become sparser.
Understanding this phenomenon is essential to interpret Fig. \ref{fig:benchmark_mat_vect_mult}: the memory overflows illustrated in this figure should not impact most real-world datasets because of their correlation between sparsity and matrix sizes.

Our figures represent the costs for fixed sparsity rates, but large matrices are unlikely to have a low sparsity rate.
For example, real-world datasets with 10K columns \cite{li_convergence_2017} can have much more than 99\% sparsity; e.g., with 10K columns, Yahoo Movie dataset already has a 99.7\% sparsity!
Thus, using real-world sparse data, our algorithm should not be affected by such memory issues, because larger datasets are sparser.

More generally, any algorithm (even sparse) has memory issues when the data is large enough.
Sparse algorithms simply support larger sparse matrices than dense algorithms.
In practice, the correlation between sparsity and matrix size even amplifies the gains provided by sparse algorithms.

\paragraph{Example use case: Recommender system}
Matrix-vector is a core component of some recommender systems.
This example use case builds a simple recommender system in which a client sends a (secret-shared) item identifier and receives a list of similar items.
We use the \textbf{Bookcrossing dataset}: contains ratings from 279K users on 340K books, resulting in 99.998\% zeros in the dataset (1K users as training dataset).
Each row corresponds to a user, and each column to a book.

Our recommender system relies on a \textbf{nearest neighbors} algorithm \cite{aggarwal_recommender_2016}.
This model has no training step: the inference is computed on the training dataset, and requires two matrix-vector multiplications.
This algorithm computes the similarity between the book the client chose and all the other books in the database.
Then, the system returns the identifiers of the $k$ most similar books.

The dense algorithm triggers a memory overflow because it cannot store the training data using a dense matrix. 
The sparse algorithm takes, in average, 48 minutes.
\textbf{The dense algorithm is impractical while our sparse algorithms support this application}.
Our runtime could be further reduced using MPC optimizations (e.g., precomputed comparisons \cite{sun_sok_2025}).

\subsection{Matrix-matrix}
\label{subsec:exp-matrix-matrix}

Fig. \ref{fig:benchmark_mat_mult} shows the communication costs for the multiplication $X^\top X$.
This experiment considers a varying number of columns and a fixed number of rows (100 rows).
The sparse algorithm brings a significant communication reductions compared to the dense multiplication: $\times100$ gain for 99.9\% sparsity and a $\times1000$ gain for 99.99\% sparsity.
The cost reduction is much higher for matrix-matrix multiplication than for matrix-vector because the matrix-matrix multiplication complexity has a quadratic dependency on sparsity.

Besides the performance gain, \textbf{our algorithm supports much larger matrices}: the dense algorithm triggers a memory overflow for matrices with more than 1K, while our sparse algorithm scales up to 1M columns.
Fig. \ref{fig:benchmark_mat_mult} shows no memory overflow for our algorithm because we stopped experiments exceeding 10 hours.

\paragraph{Example use case: Access control}
This multiplication can be used to build an ML-based access control system.
This example use case trains an ML model deciding whether an access request is suspicious.
Access data can be particularly privacy-sensitive, e.g., the access log of medical databases can indirectly leak patient information.
As medical access logs are sensitive, there is no publicly available dataset, so we use use the \textbf{Amazon access control dataset}\footnote{\url{https://github.com/pyduan/amazonaccess/tree/master/data}}; that should have similar properties as more sensitive datasets.

The Amazon dataset contains 32.7K samples, each containing the user's properties, the service properties, and the access decision (granted or not).
Each property is a categorical value that we transform into binary arrays using the one-hot encoding.
This encoded dataset has 15K features with 99.95\% sparsity.

Our access control relies on \textbf{Linear Discriminant Analysis}.
The training requires computing: (1) the proportion of each class (i.e., access decision), (2) the mean vector of each class, and (3) the covariance matrix of the whole dataset.
The most expensive operation is, by far, the covariance matrix estimation.

The covariance matrix is necessarily dense, so we use a trick to store it efficiently.
As $Cov(X) = \frac{1}{n}X^{\top} X - \bar{x}^{\top} \bar{x}$, we store the covariance in two parts: the sparse product $X^{\top} X$ and the dense mean vector $\bar{x}$.
Covariance matrices are used at inference time via a matrix-vector multiplication (e.g., $Cov(X) \cdot y = \frac{1}{n}X^{\top} X \cdot y - \bar{x}^{\top} \bar{x}\cdot y$).
Thus, this compact representation does not change the model inference algorithm and avoids a potential memory overflow due to the high-dimensional dense matrix.

We implement this application using the dense and sparse matrix multiplications.
We use the first 10K samples as training dataset.
As expected from our previous benchmark, the dense algorithm triggers a memory overflow during the covariance computation.
The sparse algorithm completes the training algorithm in 5 hours.
Hence, \emph{our sparse algorithm provides a satisfying solution} to this problem, while the \textbf{dense algorithm is impractical}.

\section{Minimizing the public knowledge}
\label{sec:minimize-prior-knowledge}
Sec. \ref{subsec:prior-knowledge} highlighted the need for sparsity-related public knowledge in secure sparse algorithms.
While necessary, this information disclosure might be problematic in some specific use cases in which revealing the exact per-row sparsity is too sensitive.
For such cases, we propose three techniques to minimize the public knowledge necessary to our algorithms.

\subsection{Row anonymization}
\label{subsec:row-anonymization}
By default, our algorithms assume the per-row sparsity to be public.
This information means that the number of non-zeros from each data owner is public (but the position and value of the non-zeros remain secret).
To enhance their privacy, the data owner can submit their shares via an anonymization layer (e.g., Tor network).  Machine learning algorithms typically assume that training instances are independently and identically drawn and don't extract information from the order of the instances (rows).  Permuting the rows of the input matrix hence does not substantially change the input.

Thanks to the anonymization layer, the servers would be unable to link a data owner to a number of non-zeros.
This privacy enhancement also reduces the public knowledge: instead of the per-row sparsity, the distribution of the per-row sparsity is public.

This difference is subtle, but it is important for privacy: instead of individual information, a ``collective'' information is public.
Indeed, this public knowledge provides information about the whole population, but not about an individual.

ML practitioners usually need this sparsity distribution information, before performing any ML training.
Sparsity is essential information to choose an appropriate ML model.
Thus, it is realistic to assume that the distribution of the per-row sparsity is public.

\subsection{Max-row padding}
\label{subsec:row-padding}
As motivated above, the per-row sparsity distribution is acceptable in practical ML applications.
However, if needed, we can further reduce the public knowledge and only reveal an upper bound on the per-row sparsity.

To only assume this upper bound as public knowledge, the input sparse matrix must be padded to this maximum per-row sparsity.
To pad a row, the data owner represents some of the zero elements as non-zeros anyway (i.e., dummy non-zeros), such that all rows have the pre-specified number of non-zeros.
Computing the maximum per-row sparsity over all rows via MPC requires $O(n\log n)$ communications (in $O(\log n)$ rounds) for a matrix of $n$ rows.

\subsection{Matrix templating}
\label{subsec:matrix-templates}
The main drawback of max-row padding  is its sensitivity to the maximum number of non-zeros per row.
Sparse datasets usually contain only a few rows with many non-zeros.
Most rows contain much fewer non-zeros than the maximum.
For example, Fig. \ref{fig:per-row-sparsity-distrib} shows the per-row sparsity distribution on the Bookcrossing dataset.
While the maximum number of non-zeros per row is nearly $10^4$, 99\% of the rows have less than $100$ non-zeros and 90\% of the rows have less than $10$!
Thus, padding the rows until a global maximum number of non-zeros would add many unnecessary dummy non-zeros to the dataset.

Up until now, all our techniques and algorithms were oblivious to the number of rows owned by each data owner: they support all possible settings from data owners owning a single row to settings with data owners owning a sub-matrix.
Note that the input size (i.e., number of rows) of each data owner is usually public in MPC.
If each data owner owns a sub-matrix, we can propose a more subtle padding technique avoiding the shortcomings of max-row padding.

\begin{figure}
    \centering
    \includegraphics[width=.6\linewidth]{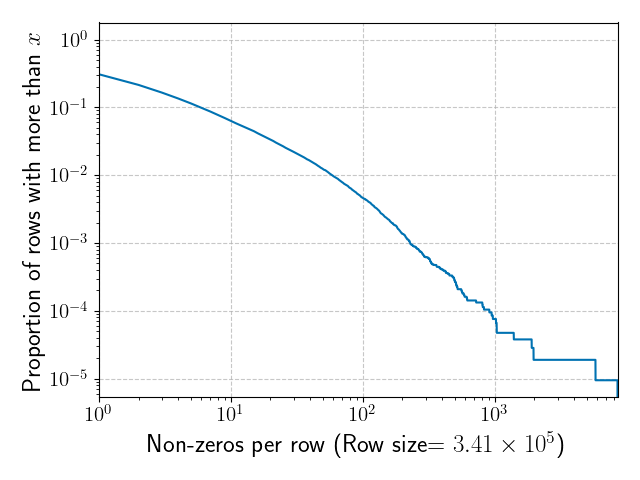}
    \caption{Per-row sparsity distribution in Bookcrossing \cite{ziegler_improving_2005}}
    \Description[Plot showing the proportion of rows having more than x non-zeros for an x varying from 1 to 10K]{Plot showing the proportion of rows having more than x non-zeros for an x varying from 1 to 10K. On a logarithmic scale, the decrease looks nearly linear.}
    \label{fig:per-row-sparsity-distrib}
\end{figure}

Instead of padding each row to a fixed maximum, we propose to build a ``matrix template'', a safe upper bound for the distribution of number of non-zeros per row, for which we are sure the real data matrix will fit in it.
Let us call  $\hat{F}(d)=\sum_{i=d}^\infty \hat{f}(i)$ the number of rows in the real dataset having $d$ non-zeros or more.
Let us describe what we call a template for a matrix $Y$: the template divides $Y$ (horizontally) into $K$ sub-matrices: $Y_1,\dots, Y_K$, with respectively $n_1, \dots, n_K$ rows.
Each row in $Y_k$ must have $\overline{\nnz_k}$ non-zero elements, and we have $\overline{\nnz_1} \le \overline{\nnz_2} \le \dots \overline{\nnz_K}$.
A matrix template is then defined by $T=\{(n_1, \overline{\nnz_1}), \dots , (n_K, \overline{\nnz_K})\}$. 
A real dataset will fit in the template (after a permutation of the rows, see Sec. \ref{subsec:row-padding}) if $\forall i \in [K]: \hat{F}(\overline{\nnz_i}+1) \le \sum_{j=i+1}^K n_j$.

To set the thresholds $\{\overline{\nnz_1},\dots,  \overline{\nnz_K}\}$, we propose to consider the quantiles: $0.25, 0.5, 0.75, 0.9, 0.99, 1$, because they capture nicely the power-law distribution of sparse data.
In other words, $\hat{F}(\overline{\nnz_1}) = 0.25, \hat{F}(\overline{\nnz_2}) = 0.5, \dots, \hat{F}(\overline{\nnz_K}) = 1$.

A data owner $P$ owning a dataset $Y$ can fit $Y$ into a template $T$ by padding $Y$ according to $T$.
First, they sort (increasingly) the rows of $Y$ based on their number of non-zeros.
Remember that the order of the rows does not matter in ML applications.
Second, they pad the first $n_1$ rows of $Y$ to have $\overline{\nnz_1}$ non-zeros.
They repeat this padding process for the next $n_2$ rows with $\overline{\nnz_2}$ non-zeros, etc. until the last $n_K$ rows with $\overline{\nnz_K}$ non-zeros.
Thanks to the sub-matrices, only $Y_K$ is padded to have the maximum number of non-zeros, the rest of the rows can have much fewer non-zeros.
Note that matrix templating with $K=1$ has the same output as max-row padding.
Similarly, the higher $K$ is, the smaller the overhead because the matrix template becomes more accurate.

We envision four privacy-preserving approaches to obtain such matrix templates.
First, one may know the population distribution, i.e., the probability $f(d)$ of drawing a row with $d$ non-zeros in the process generating the data.
Second, without such domain knowledge one may have seen a public dataset $D$ from which one could draw statistics and approximate the population distribution by inferring $\hat{F}_D(d) \ge F(d)$, before one has to process the sensitive data.
This typically leads to quite accurate results.
Third, one could start from the real data provided as input and use differential privacy to release safe upper bounds $\hat{F}_{DP}(d) \ge \hat{F}(d)$.
This strategy leading to looser bounds to protect privacy.
Fourth, we can compute it using MPC-based quantile estimation.

Sec. \ref{sec:private-prior-knowledge} details these private estimations of the matrix template.

\subsection{Overhead comparison}
Row padding and matrix templating induce some overhead due to the dummy non-zeros they add.
Fig. \ref{fig:overhead-comparison} compares the memory footprint of several alternatives: dense multiplication, no knowledge minimization, row anonymization, max-row padding, and matrix templating (using quantiles as in Sec. \ref{subsec:matrix-templates}).
The memory footprint is a simple proxy to understand the impact of knowledge minimization techniques on our algorithms'
related communication and computation costs (see also our tables in Sec. \ref{sec:algo}).
We base the comparison on four real-world datasets: SMS Spam \cite{gomez_hidalgo_content_2006}, Amazon Access Control, Bookcrossing \cite{ziegler_improving_2005}, and MovieLens \cite{harper_movielens_2016}.
For the matrix templating, our experiments simulate the MPC-based estimation with 20 data owners.

\begin{figure}
    \centering
    \includegraphics[width=.7\linewidth]{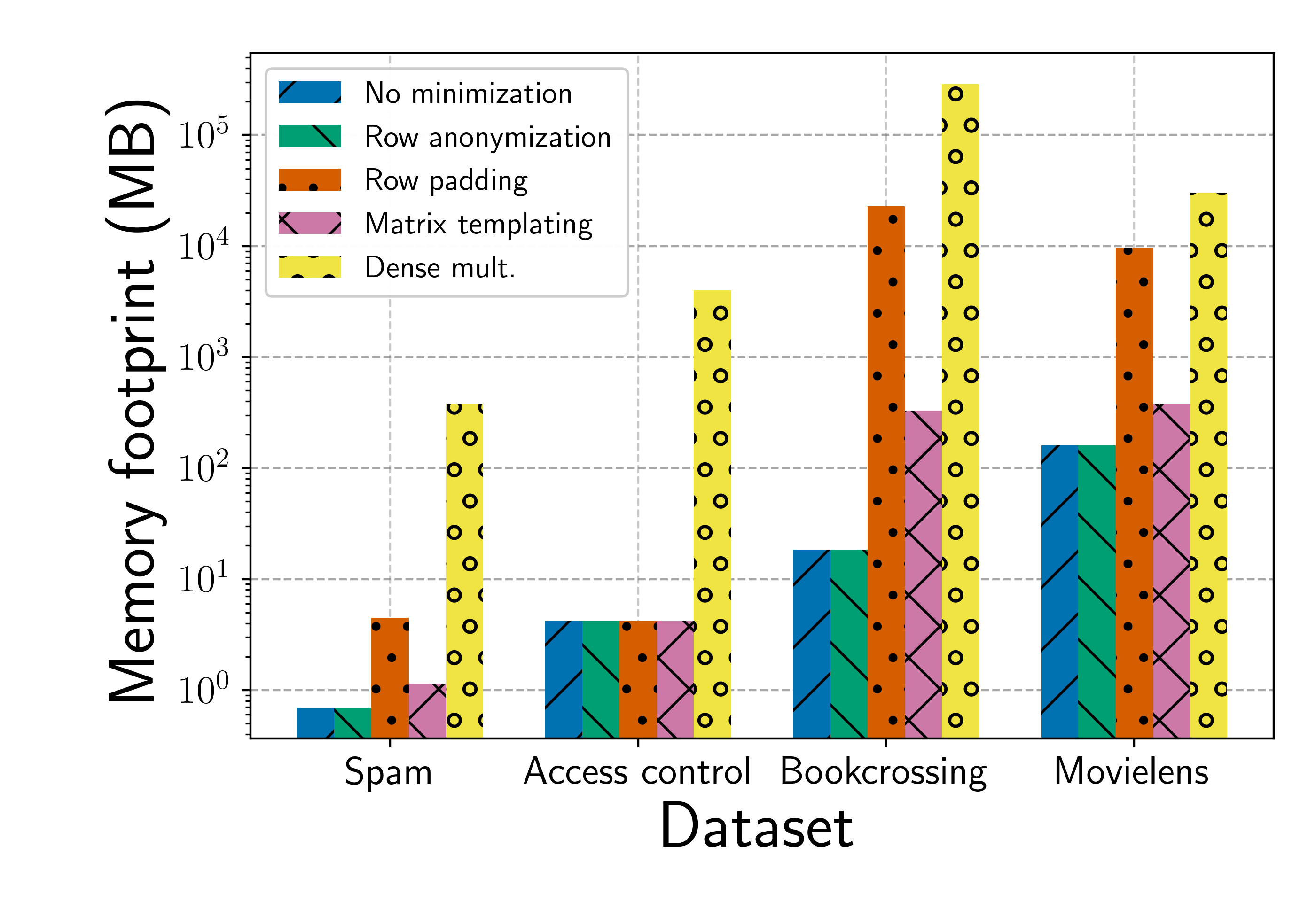}
    \caption{Matrix storage costs for different techniques of public knowledge minimization on four real-world datasets}
    \label{fig:overhead-comparison}
    \Description[This barplot presents the memory footprint of the padded matrix for 5 different settings: no knowledge mitigation, row anonymization, max-row padding, matrix templating, and dense matrix.]{This barplot presents the memory footprint of the padded matrix for 5 different settings: no knowledge mitigation, row anonymization, max-row padding, matrix templating, and dense matrix}
\end{figure}

On the one hand, (as expected) row anonymization has no effect on the memory footprint.
On the other hand, padding rows to a global maximum induces a small overhead on the Spam and the Access control datasets, while it induces significant overhead on the Bookcrossing and MovieLens datasets.
The overhead on Bookcrossing and MovieLens makes the sparse matrix nearly as large as the dense matrix... in such case, the sparse algorithms would have no interest.

Our matrix templating significantly reduces this overhead.
For example, on MovieLens, adding dummy non-zeros via matrix templating only doubles the memory footprint (vs. nearly $\times100$ for max-row padding).
As mentioned earlier, the overhead could be even further reduced if we reveal more than 5 quantiles.

Finally, on the Access Control dataset, all knowledge minimization techniques have an equal memory footprint.
In this dataset, all rows have \emph{by design} the same number of non-zeros (i.e., naturally padded), so max-row padding and matrix templating does not need to add any dummy non-zero.

\section{Private estimation of the public knowledge}
\label{sec:private-prior-knowledge}
While matrix templates minimize the necessary prior knowledge, we can further enhance privacy by estimating it using privacy-preserving techniques.
This section details how to estimate privately matrix templates based on MPC and on Differential Privacy.
Moreover, Appendix \ref{app:pop-dist} shows how to estimate them based on the population distribution.

\subsection{MPC-based}
First, each data owner $i$ shares a value $\share{\nnz_j}$ for each row $j$ they own.
Then, the servers compute securely the quantiles on the list $[\share{\nnz_1}, \dots, \share{\nnz_n}]$.
To compute them, the servers simply need to sort obliviously the list and to obtain the elements on indices: $\lfloor0.25n\rfloor,$ $\lfloor0.5n\rfloor,$ $\lfloor0.75n\rfloor,$ $\lfloor0.9n\rfloor,$ $\lfloor0.99n\rfloor,n$.

These quantiles provide an approximate template: some submatrices may not fit perfectly into it because each data owner has a slightly different row sparsity. For example, data owner $i$ may have 50\% of rows with less than 13 non-zeros while data owner $j$ has 50\% with less than 17 non-zeros. If the MPC protocol outputs 15 as quantile 0.5, data owner $i$ can pad its rows, but data owner $j$ has to remove some non-zeros to fit the template.

To finalize our template, we compute a scaling factor ensuring that all sub-matrices fit in it.
Let $\{\widehat{\nnz_1} \dots \widehat{\nnz_K}\}$ refer to the parameters from the previous approximate template.
Based on the imperfect template, each data owner $i$ computes a scaling factor $\alpha_i \ge 1$ such that: for any row $X_j$ owned by data owner $i$, $\nnz_j \le \alpha_i \widehat{\nnz_k}$ ($\widehat{\nnz_k}$ being the padding parameter corresponding to $X_j$ in the template).
Then, the data owners share their respective $\share{\alpha_i}$ and the servers compute $\overline{\alpha}$ using a secure maximum protocol on these secret-shared values.
We can now deduce the final templates: $\overline{\nnz_k} = \overline{\alpha}\widehat{\nnz_k}$, for $k\in\{1\dots K\}$.

Based the values $\{\overline{\nnz_1}, \dots, \overline{\nnz_K}\}$ and $n_i$ number of rows by data owner $i$ (usually public in MPC), we can infer the dimensions of their padded matrix: $\overline{\nnz_1}$ non-zeros in their first $\lfloor0.25n_i\rfloor$ rows, $\overline{\nnz_2}$ in their next $\lfloor(0.5 - 0.25)n_i\rfloor$ rows, ..., and  $\overline{\nnz_6}$ non-zeros in their last $\lfloor(1-0.99)n_i\rfloor$ rows.

\textbf{Using matrix templating, the public knowledge is then reduced to a few quantiles} from the per-row sparsity distribution.

\subsection{Based on Differential Privacy}
\label{subsec:dp}

\newcommand{\probDistX}{\mathcal{D}_X}
\newcommand{\epdf}{\hat{f}}
\newcommand{\ecdf}{\hat{F}}
Let $[n]$ be the set of first $n$ integers.
Let $X=\{x_i\}_{i=1}^n$ be a set of numbers identically and independently drawn from an unknown but fixed probability distribution $\probDistX$.
Let $\epdf(X,t) = |\{i\in[n]\mid x_i = t\}|$ be the empirical probability distribution (histogram) of $X$ and $\ecdf(X,t)=|\{i\in[n]\mid x_i \ge t\}|$ be the empirical cumulative distribution function.

Consider the PPML setting where each data owner has exactly one subset $X^{(j)}$, with $\mathcal{X}=\cup_{j=1}^k X^{(j)}$.  Then, the data owners can compute a secret share of $\ecdf(X,t)$ by doing computations linear in $n$ to first locally compute the partial counts $\ecdf(X^{(j)},t)$ and then summing these partial counts by MPC.

If the numbers in $X$ are sensitive, revealing $\ecdf(X,t)$ for some values of $t$ may be undesirable.  A classic approach is to use differential privacy.  In particular, two sets $X$ and $X'$ are adjacent if the sets differ in only one element.  Next, a mechanism $\mathcal{M}$ is $\epsilon$-differentially private (DP) if for all pairs of adjacent $X$ and $X'$, and for all possible subsets $Y$ of the output space of $\mathcal{M}$ there holds that
\[
	\left| \log\left(\frac{P(\mathcal{M}(X)\subseteq Y)}{P(\mathcal{P}(X')\subseteq Y)}\right) \right| \le \epsilon
\]

In contrast to classic DP mechanisms which add zero-mean noise, we are here interested in finding a DP upper bound $U_{\ecdf}$ of $\ecdf$.

\newcommand{\posLapMech}[2]{\mathcal{M}^+_{#1,#2}}
\newcommand{\posLapMechED}{\posLapMech{\epsilon}{\delta}}
Let us first consider finding a DP upper bound to a single number $y=\ecdf(X,t)$.  We define $\posLapMechED(y) = y - \frac{1}{\epsilon}\log(2\delta) + \hbox{Lap}\left(\frac{1}{\epsilon}\right)$, where $\hbox{Lap}(b)(y)=\exp(-|y|/b)/2b$ is the Laplace distribution.

\begin{lemma}
	\label{lm:single.dp.upper}
	$\posLapMechED$ is $\epsilon$-DP and
	with probability $1-\delta$ there holds $y \le \posLapMechED(y)$.
\end{lemma}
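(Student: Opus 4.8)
The plan is to split the statement into its two independent claims — the $\epsilon$-DP guarantee and the high-probability upper-bound guarantee — and to treat each with a short self-contained argument. For the privacy claim, the key observation is that $y=\ecdf(X,t)=|\{i\in[n]\mid x_i\ge t\}|$ has global sensitivity $1$: if $X$ and $X'$ are adjacent, then at most one index can cross the threshold $t$, so $|y-y'|\le 1$ for $y'=\ecdf(X',t)$. Since $\posLapMechED$ only shifts $y$ by the data-independent constant $-\tfrac{1}{\epsilon}\log(2\delta)$ before adding $\mathrm{Lap}(1/\epsilon)$ noise, that shift is post-processing and cannot degrade privacy, so it suffices to verify the textbook Laplace mechanism on a sensitivity-$1$ query. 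To keep this self-contained I would instead write the output density explicitly: with $c=\tfrac{1}{\epsilon}\log(2\delta)$ the density of $\posLapMechED(y)$ at a point $z$ is $p_y(z)=\tfrac{\epsilon}{2}\exp(-\epsilon|z-y+c|)$, the constant shifts cancel in the ratio, and the reverse triangle inequality yields
\[
\left|\log\frac{p_y(z)}{p_{y'}(z)}\right| = \epsilon\,\bigl|\,|z-y+c|-|z-y'+c|\,\bigr| \le \epsilon\,|y-y'| \le \epsilon,
\]
which is exactly the $\epsilon$-DP condition stated above.

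For the coverage claim I would prove $P\bigl(y\le\posLapMechED(y)\bigr)\ge 1-\delta$ directly. Writing $L\sim\mathrm{Lap}(1/\epsilon)$, the $y$ terms cancel and the event $y\le\posLapMechED(y)$ becomes $L\ge\theta$ with $\theta=\tfrac{1}{\epsilon}\log(2\delta)$. In the standard DP regime $\delta<\tfrac12$ this threshold is negative, so I would use the lower branch of the Laplace CDF, $P(L<\theta)=\tfrac12\exp(\epsilon\theta)$ for $\theta<0$. Substituting $\theta=\tfrac{1}{\epsilon}\log(2\delta)$ gives $\exp(\epsilon\theta)=2\delta$, hence $P(L<\theta)=\delta$ and $P(L\ge\theta)=1-\delta$, as required.

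The calculations are routine, so there is no deep obstacle; the only point demanding care is the sign of $\log(2\delta)$. The whole construction — shifting by $-\tfrac{1}{\epsilon}\log(2\delta)$ rather than adding zero-mean noise — is designed precisely so that the negative threshold lands on the lower exponential branch of the CDF and produces the clean $1-\delta$ coverage. I would therefore state the implicit assumption $\delta<\tfrac12$ explicitly and flag that picking the correct CDF branch (rather than the $\theta\ge 0$ branch) is the one place where a careless argument could silently give the wrong constant.
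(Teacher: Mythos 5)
Your proof is correct and the coverage argument is essentially the paper's own: the paper sets $\lambda=-\log(2\delta)/\epsilon$ and computes $P(\mathrm{Lap}(1/\epsilon)\le-\lambda)=\tfrac12 e^{-\lambda\epsilon}=\delta$, which is your calculation with $\theta=-\lambda$. You go slightly further by explicitly verifying the $\epsilon$-DP half via the sensitivity-$1$ Laplace-mechanism ratio (the paper leaves that part unargued) and by flagging the implicit $\delta<\tfrac12$ assumption, both of which are reasonable additions rather than deviations.
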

\begin{proof}
	For $\lambda>0$, the probability that $Lap(1/\epsilon)\le -\lambda$ is
	\[\int_{-\infty}^{-\lambda} \exp(-t\epsilon)\epsilon/2 \,\hbox{d}t = \frac{1}{2} \exp(-\lambda\epsilon) \]
	Set $\lambda = -\log(2\delta)/\epsilon$, then the probability that $Lap(1/\epsilon)\le -\lambda$ is $\delta$.  So, the probability that $\posLapMechED(y) = y+ \lambda + \hbox{Lap}(1/\epsilon) < y$ is $\delta$.
\end{proof}

Next, let us consider the problem of finding a DP vector $(u_i)_{i=1}^l$ such that $\forall i\in [l] : \ecdf(X,t_i)\le u_i$ with high probability.  A well-known result on differential privacy describing the relation between the privacy cost $\epsilon$ and the needed amount of noise for histograms and similar data structures was first introduced by Dwork \cite{dwork_differential_2010}.  The main idea is that for two adjacent datasets $X$ and $X'$, the sequence $(\ecdf(X,t_i)-\ecdf(X',t_i))_{i=1}^l$ starts with 0 or more zeros, followed by a series of 0 or more $+1$ or $-1$ values, followed by 0 or more zeros.  This is a consequence of the assumption that $X$ and $X'$ differ in only one element, say $x_T$ and $x_T^\prime$, and hence this causes the sequence to only be different at the thresholds $t_i$ between the smallest and the largest of $x_T$ and $x_T^\prime$.  As a consequence, the series $(\ecdf(X,t_i))_{i=1}^l$ and $(\ecdf(X',t_i))_{i=1}^l$ correlate strongly and a strategy adding correlated noise gives favorable results (low variance and low $\epsilon$).  In particular, let us assume $l=2^L$ is a power of $2$ (else, we can just add a few thresholds $t_i$ and benefit from determining DP upper bounds to $\ecdf(X,t_i)$ for these thresholds). For $i\in[l]$ and $j+1\in [L]$, let $b_{i,j}$ give the bit decomposition of $i-1$ in the sense that $i-1=\sum_{j=0}^{L-1} 2^j b_{i,j}$.
\newcommand{\posHistMech}[2]{\mathcal{M}^{+l}_{#1,#2}}
\newcommand{\posHistMechED}{\posHistMech{\epsilon}{\delta}}
Let $T=(t_i)_{i=1}^l$ be a set of thresholds.
For $i\in[L]$ and $j+1\in[2^{i-1}]$, let $\eta_{i,j}=Lap((L+1)/\epsilon)$ be Laplace random variable.
Then, for $j+1\in[L]$ and $k \in [2^j]$ we define
\[
	\posHistMechED(X,t_i)=\ecdf(X,t_i) + \frac{L(L+1)}{\epsilon}\log\left(\frac{L(L+1)}{2\delta}\right)
	+ \sum_{j=0}^{L-1} \eta_{j+1,\lfloor (i-1)/2^j\rfloor}
\]

\begin{lemma}
	Let $\delta = \frac{\lambda^L e^{-\lambda}}{L!}$
	The complete sequence $\left(\posHistMechED(X,t_i)\right)_{i=1}^l$ is $\epsilon$-DP and with probability at least $1-\delta$ there holds $\forall i\in[l]: \ecdf(X,t_i) \le \posHistMechED(X,t_i)$.
\end{lemma}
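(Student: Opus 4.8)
The plan is to establish the two assertions separately: first the $\epsilon$-differential privacy of the full noisy sequence, and then the one-sided (upper-bound) guarantee holding with probability $1-\delta$. Both lift the single-threshold case (Lemma \ref{lm:single.dp.upper}) to the correlated binary-tree noise used here.

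For privacy, I would fix adjacent datasets $X,X'$ and use the structural fact already recorded in the text: the difference vector $d=(\ecdf(X,t_i)-\ecdf(X',t_i))_{i=1}^l$ vanishes outside one contiguous block of indices, on which it is the constant $+1$ or $-1$. The mechanism adds to coordinate $i$ the sum $\sum_{j=0}^{L-1}\eta_{j+1,\lfloor(i-1)/2^j\rfloor}$, i.e. the Laplace variables attached to the $L$ dyadic intervals (tree nodes) containing leaf $i$. Writing the noise as a linear image $A\eta$ of independent $\mathrm{Lap}((L+1)/\epsilon)$ variables, the privacy loss at a realized output $o$ equals $\log\big(P(A\eta=o-\ecdf(X,\cdot))/P(A\eta=o-\ecdf(X',\cdot))\big)$, which by shift-invariance of Lebesgue measure and the inequality $\mathrm{Lap}(x+a)\ge\mathrm{Lap}(x)e^{-|a|/b}$ is at most $\|\Delta\eta\|_1/b$ for any $\Delta\eta$ with $A\Delta\eta=d$, where $b=(L+1)/\epsilon$. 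The key step is to exhibit a small-weight $\Delta\eta$: since $d$ is a single contiguous block (equivalently, the symmetric difference of two threshold-prefixes whose shared high-order dyadic pieces cancel), it is realizable by adjusting only nodes along essentially one root-to-leaf path, giving $\|\Delta\eta\|_1\le L+1$. Combined with $b=(L+1)/\epsilon$ the loss is at most $\epsilon$, and the released vector is a deterministic post-processing of $\{\ecdf+\eta\}$, so $\epsilon$-DP is preserved.

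For the upper-bound guarantee, at coordinate $i$ the total added quantity is the deterministic shift $c_0=\frac{L(L+1)}{\epsilon}\log\!\big(\frac{L(L+1)}{2\delta}\big)$ plus the noise sum $S_i=\sum_{j=0}^{L-1}\eta_{j+1,\lfloor(i-1)/2^j\rfloor}$, so $\ecdf(X,t_i)\le\posHistMechED(X,t_i)$ holds at $i$ exactly when $S_i\ge -c_0$. Every $S_i$ is a sum of $L$ independent $\mathrm{Lap}(b)$ variables with $b=(L+1)/\epsilon$, so the marginals coincide across coordinates. Because the negative part of a $\mathrm{Lap}(b)$ variable is $\mathrm{Exp}(1/b)$, the lower tail $P(S_i<-c_0)$ is dominated by a $\mathrm{Gamma}(L,1/b)$ (Erlang) tail; with $\lambda=\epsilon c_0/(L+1)$ this tail is controlled by the Poisson-type term $\frac{\lambda^{L}e^{-\lambda}}{L!}$, the quantity named $\delta$, each single-coordinate estimate being an $L$-fold convolution of the one-sided bound of Lemma \ref{lm:single.dp.upper}. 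Passing from this per-coordinate estimate to the simultaneous statement over all $l=2^{L}$ coordinates is where the correlation must be used: coordinates in a common dyadic block share their noise, so the naive $2^{L}$ union collapses, and the shift's logarithm $\log(L(L+1)/2\delta)$ — polylogarithmic in $L$ rather than the $\Theta(L)$ a blind union would force — records exactly this saving.

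The hard part will be the two correlation-sensitive constants. On the privacy side the crux is showing that the contiguous difference block — crucially not a generic range, which would need $\sim 2L$ dyadic pieces — admits an adjustment of $\ell_1$-weight at most $L+1$ once the shared high-order nodes cancel, so that the scale $(L+1)/\epsilon$ yields exactly $\epsilon$ and not $2\epsilon$. On the utility side the delicate point is making the correlation rigorous: bounding $P(\exists i: S_i<-c_0)$ without paying the $2^{L}$ union while still collapsing the $L$-fold Laplace convolution to the single term $\frac{\lambda^{L}e^{-\lambda}}{L!}$, in the spirit of the construction of \cite{dwork_differential_2010}. Lemma \ref{lm:single.dp.upper} supplies the per-level one-sided estimate, but propagating it up the tree so that simultaneous failure over all leaves is governed by the shared node noises rather than by $2^{L}$ independent events is the accounting I expect to be most error-prone.
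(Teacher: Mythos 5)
Your overall architecture (privacy and the one-sided utility bound handled separately, both lifted from Lemma \ref{lm:single.dp.upper}) matches the paper's, but on both halves you take a more elaborate route than the paper does, and in both cases the step you yourself flag as the hard part is exactly the step that is missing. For privacy, the paper simply invokes \cite[thm 4.1]{dwork_differential_2010}; you instead set out to reprove it via an explicit low-weight preimage $\Delta\eta$ with $A\Delta\eta=d$. That is a legitimate alternative, but your claim that the contiguous difference block is realizable by adjusting only nodes along essentially one root-to-leaf path, giving $\|\Delta\eta\|_1\le L+1$, is asserted rather than shown: a generic contiguous range of leaves decomposes into up to $2L$ maximal dyadic nodes, and the cancellation of ``shared high-order pieces'' you appeal to is precisely what would need to be verified. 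Until that is done, your argument only delivers roughly $2\epsilon$-DP. Citing the known result, as the paper does, is the cleaner move here.

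For the utility half you miss the elementary argument the paper actually uses. The paper needs neither Erlang/Gamma tails nor the Poisson-type term: it notes that the noise at coordinate $i$ is a sum of $L$ independent $\mathrm{Lap}((L+1)/\epsilon)$ variables, that each one individually falls below $-\frac{L+1}{\epsilon}\log\bigl(\frac{L(L+1)}{2\delta}\bigr)$ with probability at most $\delta/L$ (the same computation as in Lemma \ref{lm:single.dp.upper}), and that a union bound over these $L$ summands gives $S_i\ge -c_0$ with probability at least $1-\delta$. Your detour through the $L$-fold convolution and the term $\frac{\lambda^L e^{-\lambda}}{L!}$ is never actually carried out (``is controlled by'' is doing all the work), and the correlation-based collapse of the union over all $l=2^L$ coordinates that you promise is not substantiated: distinct leaves share high-level noise but differ in low-level variables, of which there are $\Theta(2^L)$ independent ones in total, so the correlation structure does not by itself reduce the simultaneous guarantee to a per-coordinate one. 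To be fair, the paper's own proof also silently treats only a single $i$ and does not perform the union over $i\in[l]$ that the $\forall i$ in the statement would require; your instinct that something must be paid there is sound, but your proposed resolution does not close it either. As written, the proposal is a plan whose two load-bearing steps are left open rather than a proof.
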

\begin{proof}
	That the sequence  $\left(\posHistMechED(X,t_i)\right)_{i=1}^l$ is $\epsilon$-DP follows directly from a result by Dwork \cite[thm 4.1]{dwork_differential_2010}.
	Similarly to Lemma \ref{lm:single.dp.upper}.
	For the second part of the statement, we need to analyze the probability that $
		\sum_{j=0}^{L-1} \eta_{j+1,\lfloor (i-1)/2^j\rfloor}\ge -\frac{L(L+1)}{\epsilon}\log\left(\frac{L(L+1)}{2\delta}\right)$.

	The lefthand side is a sum of $L$ independent Laplace distributions $Lap((L+1)/\epsilon)$, we then have $\Pr\left(Lap\left(\frac{L+1}{\epsilon}\right)\le - \frac{L+1}{\epsilon}\log\left(\frac{L(L+1)}{2\delta}\right)\right) \le \frac{\delta}{L}$.
	So, the probability that this equality holds for all $L$ terms is $\delta$.
\end{proof}

\begin{figure}
	\centering
	\includegraphics[width=.5\linewidth]{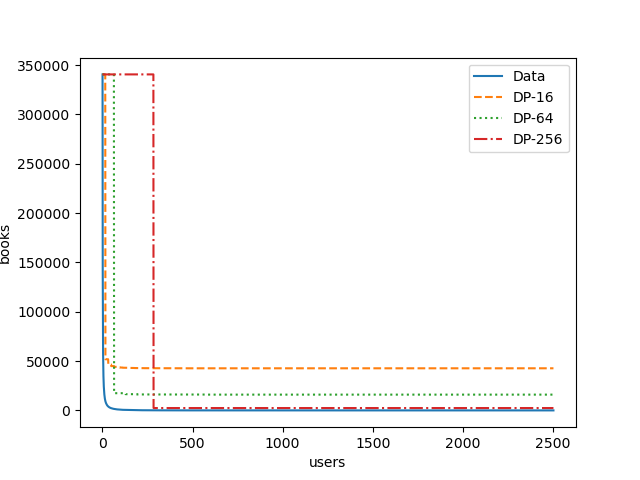}
	\caption{Distribution of row non-zero counts, and DP safe upper bounds for variable template block sizes}
	\Description[Plot showing the distribution of row non-zero counts, and DP safe upper bounds for variable template block sizes]{Plot showing the distribution of row non-zero counts, and DP safe upper bounds for variable template block sizes}
	\label{fig:dp.recomm}
\end{figure}

Fig. \ref{fig:dp.recomm} plots for the Bookcrossing dataset the real cumulative distribution of non-zeros per row; gives how many books were rated by $x$ or more users (for any $x$).
As can be seen, a few books were rated by a large number of users and most books were only rated by a few users.
The figure also shows several curves where DP statistics of the empirical cumulative distribution function were used to publish upper bounds for the number of non-zero slots needed per row.
The several curves use a different number of rows per block in the template matrix (the quantiles discussed in Sec. \ref{sec:minimize-prior-knowledge}).
In reality it is likely for most datasets (and also this one) we could perform better by taking smaller blocks for small $x$ and larger blocks for larger $x$.

We used $0.01$-differential privacy in Fig. \ref{fig:dp.recomm}, and the upper bounds are guaranteed to give sufficient space in the template for the data with probability at least $1-10^6$.  In case the data owners would in the end send rows not fitting in the template, the servers would need to abort the protocol and restart.  In practice, machine learning algorithms usually use values for $\epsilon$ larger than $0.01$.

Revealing in a DP way more quantiles requires adding more noise, which in turns results in a higher upper bound.
Too many template partitions (and hence quantiles) will unnecessarily closely model the data and suffer from the differential privacy protection, while too few quantiles will insufficiently take into account the decrease of the function.
In contrast to Fig. \ref{fig:per-row-sparsity-distrib} for the same dataset, Fig. \ref{fig:dp.recomm} uses a linear scale, making the memory footprint more visual.  In particular, the memory footprint is the area under the curve of a certain strategy.
It seems that in this case blocks of 64 rows is the best among the three options shown.

\section{Conclusion}
Our paper introduced algorithms based on oblivious sorting to multiply secret-shared sparse matrices.
Our algorithms are \textbf{compatible with a more generic setting} (i.e., outsourced MPC) than existing MPC works on sparse multiplications.
Our algorithms \textbf{avoid memory issues} present in dense secure multiplications by leveraging the data sparsity.
Our experiments show \textbf{communication cost reductions} up to $\times 1000$ compared to dense matrix multiplications.
We also implemented two real-world ML applications highlighting that \textbf{our sparse algorithms support applications impractical with existing protocols}.
Finally, we proposed methods to \textbf{minimize the public knowledge} (mandatory in any secure sparse algorithm) and obtain it in a privacy-preserving manner.

Our implementation is publicly available and open-source, so our algorithms can be transferred into MPC frameworks:
\url{https://github.com/MarcT0K/Secure-sparse-multiplications}

\begin{acks}
	This work was supported by the Netherlands Organization for Scientific Research (NWO) in the context of the SHARE project [CS.011], the ANR project ANR-20-CE23-0013 PMR, the ANR REDEEM project (ANR-23-PEIA-005), and the Horizon Europe project HE TRUMPET.
\end{acks}

\bibliographystyle{ACM-Reference-Format}
\bibliography{ref}

\appendix
\section{Security proof sketch}
\label{app:sec_proof}
This appendix sketches the security proof of our algorithms.
We provide proof sketches in the honest-but-curious model using the real-ideal paradigm \cite{evans_pragmatic_2018}.

We can define the ideal functionality for the sparse vector multiplication $\mathcal{F}_\mathsf{VectMult}^{\eta}$ as follows:
\begin{itemize}
	\item Input:  $\share{x_a}= \{\left(\share{i_{1,a}}, \share{v_{1,a}}\right),\ldots \left(\share{i_{\eta,a}}, \share{v_{\eta,a}}\right)\}$ and $\share{x_b}$ (defined similarly)
	\item Output: a secret-shared scalar $\share{s}$ s.t. $s=\sum_{j,k\in\{1\dots \eta\}} v_{j,a}\cdot v_{k,b} \cdot \delta(i_{j,a}, i_{k,b})$ (with $\delta$ the Kroenecker delta).
\end{itemize}

This functionality has a property $\eta$.
This property represents the protocol public knowledge in our functionality.
As motivated in Sec. \ref{subsec:prior-knowledge}, public knowledge is necessary to have efficient secure sparse multiplications.

We sketch a proof showing that Alg. \ref{alg:sparse_vect_mult} securely realizes the ideal functionality $\mathcal{F}_\mathsf{VectMult}^{\eta}$.
Our proof relies essentially on the composition theorem for the honest-but-curious model (Theorem 7.3.3 of \cite{goldreich_foundations_2009}).
Certain subprotocols we use have been proven to be secure and correct in the honest-but-curious model.
By using the composition theorem, we can rely on their security and only prove the security of our overall protocol in which these subprotocols are embedded.

The correctness of Alg. \ref{alg:sparse_vect_mult} can be verified by an easy arithmetic exercise.

\textit{Security of the sub-protocols}
The first statement in Alg. \ref{alg:sparse_vect_mult} is a concatenation that is performed locally with no communication and hence does not need to be simulated.
The second statement is an oblivious sort.
The proof depends on the sub-protocol chosen for instantiation.
This choice is highly influenced by the threat model, so we consider the Batcher's sort for dishonest majority and the oblivious radix sort for honest majority (i.e., the two threat models studied in our paper).
On the one hand, the Batcher's sort only requires comparisons and arithmetic operations.
It is then secure because the arithmetic and comparison operations on secret-shared values are trivially secure in the honest-but-curious model \cite{hoogh_design_2012}.
On the other hand, the oblivious radix sort is proven secure in \cite{hamada_oblivious_2014}.
After the sorting statement, we have a loop.
This loop statement reveals no information because it uses $\eta$, a property of $\mathcal{F}_\mathsf{VectMult}^{\eta}$ (i.e., a public value).
Finally, the loop contains a conditional statement.
As highlighted in Sec. \ref{sec:algo}, we use the notation $[\![\text{\textbf{obliv. if}}]\!]$ to maximize the readability.
This conditional statement would be implemented as follows: $\share{s}\leftarrow\share{s} + (\share{\coord(z_i)} = \share{\coord(z_{i+1})})\times\share{\val(z_i)}\times\share{\val(z_{i+1})}$, with $\share{\coord(z_i)} = \share{\coord(z_{i+1})}$ outputting $\share{1}$ or $\share{0}$.
The arithmetic operations are trivially secure in the honest-but-curious model.
For the secure equality, we can use the secure protocol proposed by Toft \cite{toft_primitives_2007}.
Hence, this oblivious conditional statement is secure because it relies on secure comparisons and arithmetic operations.
Such oblivious conditional statements are recurrent in MPC protocols (e.g., in private heavy hitters \cite{asharov_efficient_2022}).

\paragraph{Security of the overall protocol}
Since all the sub-protocols are individually secure, according to the Composition Theorem, Alg. \ref{alg:sparse_vect_mult} preserves the input secrecy in the honest-but-curious model.

\paragraph{Matrix-vector and matrix-matrix}
We do not develop the proof sketch for our sparse matrix-vector (Alg. \ref{alg:sparse_mat_vect_mult}) and matrix-matrix (Alg. \ref{alg:sparse_mat_mult}) multiplications, but the proof is very similar.
As our sparse vector multiplication, these two algorithms use oblivious sorting, comparisons, arithmetic operations, and oblivious conditional structures.
For all these operations, the security would be proven as sketched for the vector multiplication.
However, they also use two additional sub-protocols: AggEqualCoord and PlaceholderRemoval.

The AggEqualCoord relies on an oblivious conditional structure and arithmetic operations.
The proof is then trivial, because we can express the conditional structure using a combination of comparisons and arithmetic operators.

The PlaceholderRemoval uses a ``shuffle-and-reveal'' trick: shuffle a list of secret-shared values and reveal which of them are placeholders.
Other security papers \cite{hamada_oblivious_2014} used similar ``shuffle-and-reveal'' tricks in protocols with proven security.
In our algorithms, revealing the number of placeholders indirectly reveals the number of non-zero elements in the output matrix.
However, this information is public by definition because it is contained in the output.

\paragraph{Extension to malicious security}
Our security proofs hold in the honest-but-curious model.
Malicious security would require a verifiable secret-sharing scheme and dedicated proofs for the security in the malicious model.
We remind that our main sub-protocols have known maliciously secure variant (especially oblivious sorting \cite{asharov_efficient_2022,hamada_oblivious_2014} and oblivious shuffling \cite{asharov_efficient_2022,laur_round-efficient_2011}) that would be the basis of a malicious security proof.
We leave this extension for future work.

\section{Avoiding the linear round complexity}
\label{app:linear_rounds}

This appendix presents optimized algorithms to avoid the linear round complexity in Alg. \ref{alg:sparse_mat_vect_mult} and \ref{alg:sparse_mat_mult}.
In Alg. \ref{alg:sparse_mat_vect_mult}, this round complexity has two sources: the aggregation step (Alg. \ref{alg:naive_aggregate}) and the multiplication loop: lines \ref{lst:mult_loop} to \ref{lst:mult_loop_end}.
In Alg. \ref{alg:sparse_mat_mult}, the round complexity is only caused by the aggregation step.

\begin{algorithm}
	\caption{Secure maximum via a recursive function \label{alg:sec_max}}

	\begin{algorithmic}[1]
		\Input{$\share{V}$, an unsorted list of values.}
		\Function{RecursiveMax}{$\share{V}$}
		\Let{$n$}{$\text{length(\share{V})}$}
		\If{$n=1$}
		\State \Return{$\share{V_1}$}
		\EndIf
		\Statex

		\algrenewcommand\algorithmicif{$[\![\text{\textbf{obliv. if}}]\!]$ }

		\State{Initialize $\share{\text{Roots}}$, a list of $\lceil n/2\rceil$ secret-shared values}
		\For{$k \gets 1 \textrm{ to } \lceil n/2\rceil$}
		\If{$\share{V_{2k}} >  \share{V_{2k+1}}$}
		\Let{$\share{\text{Roots}_k}$ }{$\share{V_{2k}}$}
		\Else
		\Let{$\share{\text{Roots}_k}$ }{$\share{V_{2k+1}}$}
		\EndIf
		\EndFor

		\State \Return \Call{RecursiveMax}{$\text{Roots}$}
		\EndFunction
	\end{algorithmic}
\end{algorithm}

To avoid this round complexity, we take inspiration from the recursive secure maximum implemented in many MPC libraries (including MPyC).
If implemented naively (i.e., scanning the list elements one by one), the round complexity of the maximum function would be linear.
MPC libraries rely on a recursive algorithm to reach a logarithmic complexity.
Alg. \ref{alg:sec_max} details this algorithm.
The intuition behind it is to represent the list using a tree.
Then, each sub-tree recursively computes its maximum and ``propagate'' this value toward the root.

We reuse the same recursive structure to build our optimized algorithms with logarithmic round complexity.
Alg. \ref{alg:optim_aggregate} presents the optimized aggregation algorithm.
Alg. \ref{alg:optim_mult_loop} presents the optimized multiplication loop.
Due to their recursive structure, these algorithms are more convoluted than their naive alternatives, but these optimizations are necessary to respect the round complexities presented in the main text for the sparse matrix-vector and matrix-matrix multiplications.
These two algorithms are similar and reuse the same intuitions; only a few minor operations differ.
Hence, this appendix only discusses Alg. \ref{alg:optim_aggregate} in detail.

\begin{algorithm}[t]
	\caption{Optimized aggregation of tuples with equal coordinates\label{alg:optim_aggregate}}

	\algrenewcommand\algorithmicif{$[\![\text{\textbf{obliv. if}}]\!]$ }
	\begin{algorithmic}[1]
		\Input{$\share{Z}$, a sorted list of $n$ secret-shared tuples.}
		\PublicKnowledge{None}

		\Statex
		\Function{OptimizedAggEqualCoord}{$\share{Z}$}
		\State{// Pre-processing: $O(1)$ rounds}
		\State{Initialize list $\share{\text{Children}}$ with $\lceil n/4\rceil$ empty elements}
		\For{$k \gets 1 \textrm{ to } \lceil n/4\rceil$}
		\State{\Call{AggIfEqual}{$Z_{4k},Z_{4k+1}$}}
		\State{\Call{AggIfEqual}{$Z_{4k+1},Z_{4k+2}$}}
		\State{\Call{AggIfEqual}{$Z_{4k+2},Z_{4k+3}$}}
		\Let{$\share{\text{Children}_k}$}{$(\share{Z_{4k}}, \share{Z_{4k+1}}, \share{Z_{4k+2}}, \share{Z_{4k+3}})$}
		\EndFor

		\Statex
		\State{// Online phase: $O(\log n)$ rounds}
		\Let{$\share{\text{Z}}$}{\Call{RecProp}{$\share{\text{Children}}$}}
		\Statex
		\State{// Offline post-processing}
		\For{$k \gets 1 \textrm{ to } \lceil n/4\rceil$}
		\Let{$(\share{Z_{4k}}, \share{Z_{4k+1}}, \share{Z_{4k+2}}, \share{Z_{4k+3}})$}{$\share{\text{Children}_k}$}
		\EndFor

		\For{$k \gets 1 \textrm{ to } n$}
		\If{$\share{\val(Z_k)} \neq \share{0}$}
		\Let{$\share{\val(Z_k)}$}{$\share{\bot}$}
		\EndIf
		\EndFor

		\State \Return $\share{\text{Z}}$
		\EndFunction
	\end{algorithmic}
\end{algorithm}

\begin{algorithm}[t]
	\caption{Sub-functions used in Alg. \ref{alg:optim_aggregate}}

	\algrenewcommand\algorithmicif{$[\![\text{\textbf{obliv. if}}]\!]$ }
	\label{subalg:optim_aggregate}
	\begin{algorithmic}[1]
		\Procedure{AggIfEqual}{$\share{\text{tup}_1}, \share{\text{tup}_2}$}
		\If{$\share{\coord(\text{tup}_1)} =\share{\coord(\text{tup}_2)}$}
		\Let{$\share{\val(\text{tup}_2)}$}{$\share{\val(\text{tup}_2)} + \share{\val(\text{tup}_1)}$}
		\Let{$\share{\val(\text{tup}_1)}$}{$\share{0}$}
		\EndIf
		\EndProcedure

		\Statex
		\Function{UpProp}{$\share{\text{Child}_1}, \share{\text{Child}_2}$}
		\Let{$\share{\text{Root}}$}{$(\minleft(\share{\text{Child}_1}),\maxright(\share{\text{Child}_1}),$ $\minleft(\share{\text{Child}_2}),\maxright(\share{\text{Child}_2}))$}
		\State{\Call{AggIfEqual}{$\minleft(\share{\text{Root}}),\maxleft(\share{\text{Root}})}$}
		\State{\Call{AggIfEqual}{$\maxleft(\share{\text{Root}}),\minright(\share{\text{Root}})}$}
		\State{\Call{AggIfEqual}{$\minright(\share{\text{Root}}),\maxright(\share{\text{Root}})}$}
		\State \Return $\share{\text{Root}}$

		\EndFunction
		\Statex
		\Procedure{DownProp}{$\share{\text{Child}},$ $\share{\text{NewMin}},$ $\share{\text{NewMax}}$}
		\Let{$\minleft(\share{\text{Child}})$}{$\text{\share{\text{NewMin}}}$}
		\State{\Call{AggIfEqual}{$\minleft(\share{\text{Root}}),\maxleft(\share{\text{Root}})}$}
		\State{\Call{AggIfEqual}{$\maxleft(\share{\text{Root}}),\minright(\share{\text{Root}})}$}
		\Let{$\maxright(\share{\text{Child}})$}{$\share{\text{NewMax}}$}
		\EndProcedure
	\end{algorithmic}
\end{algorithm}

Our algorithm builds a binary tree from the non-zero tuples.
Each leaf contains four non-zero tuples, and the internal nodes represent some sub-lists of non-zero tuples.
Each internal node stores four non-zero tuples: ``minimum tuple'' of the left (child) sub-tree, ``maximum tuple'' of the left sub-tree, ``minimum tuple'' of the right sub-tree, and ``maximum tuple'' of the right sub-tree.
We compare the tuples based on their coordinate, so the maximum tuple is the tuple with the highest coordinates.

To understand the intuition behind Alg. \ref{alg:optim_aggregate}, it is necessary to understand the role of the recursion.
Let us assume we have two sub-lists on which the aggregation is already completed, and let us understand the necessary steps to obtain a whole list with the aggregation completed.
To know whether a sub-list must be updated, we only need to compare the ``maximum'' tuple of the left sub-list to the ``minimum'' tuple of the right sub-list.
This claim holds because our aggregation algorithm takes as \emph{input a sorted list} of non-zero tuples.
If these two tuples share the same coordinate, the ``maximum'' tuple of the left list must be aggregated to the ``minimum'' tuple in the right list.
Otherwise, no operation is necessary.

\begin{algorithm}[t]
	\caption{Generic recursive propagation used in Alg. \ref{alg:optim_aggregate} and \ref{alg:optim_mult_loop} \label{alg:rec_prop}}
	\begin{algorithmic}[1]
		\Input{$\share{\text{Children}}$, a list of 4-element tuples. UpProp and DownProp implementations vary depending on the algorithm using this recursive approach.}
		\Function{RecProp}{$\share{\text{Children}}$}
		\Let{$n$}{$\text{length(\share{\text{Children}})}$}
		\If{$n=1$}
		\State \Return{$\share{\text{Children}}$}
		\EndIf

		\Statex
		\For{$k \gets 1 \textrm{ to } \lceil n/2\rceil$}
		\Let{$\share{\text{Roots}_k}$ }{\Call{UpProp}{$\share{\text{Children}_{2k}},\share{\text{Children}_{2k+1}}$}}
		\EndFor
		\Let{$\share{\text{Roots}}$}{\Call{RecProp}{$\share{\text{Roots}}$}}
		\For{$k \gets 1 \textrm{ to } \lceil n/2\rceil$}
		\State{\Call{DownProp}{$\share{\text{Children}_{2k}},$ $\minleft(\share{\text{Roots}_k}),$ $\maxleft(\share{\text{Roots}_k})$}}
		\State{\Call{DownProp}{$\share{\text{Children}_{2k+1}},$ $ \minright(\share{\text{Roots}_k}),$ $\maxright(\share{\text{Roots}_k})$}}
		\EndFor
		\State \Return $\share{\text{Children}}$
		\EndFunction
	\end{algorithmic}
\end{algorithm}

\begin{algorithm}[t]
	\caption{Optimized multiplication loop for Alg. \ref{alg:sparse_mat_vect_mult}}
	\label{alg:optim_mult_loop}

	\algrenewcommand\algorithmicif{$[\![\text{\textbf{obliv. if}}]\!]$ }
	\begin{algorithmic}[1]
		\Input{$\share{Z}$, a sorted list of $n$ secret-shared tuples.}
		\PublicKnowledge{None}

		\Statex
		\Function{OptimizedMultLoop}{$\share{Z}$}
		\State{// Pre-processing $O(1)$ rounds}
		\Let{$V_1$}{$\share{\val(Z_1)}$}
		\For{$k \gets 2 \textrm{ to } n$}
		\If{$\share{\coord(Z_k)} = \share{\coord(Z_{k-1})}$}
		\Let{$V_k$}{$\share{\val(Z_k)}$}
		\Else
		\Let{$V_k$}{$\share{\bot}$}
		\EndIf
		\EndFor

		\State{Initialize list $\share{\text{Children}}$ with $\lceil n/4\rceil$ empty elements}
		\For{$k \gets 1 \textrm{ to } \lceil n/4\rceil$}
		\State{\Call{ReplaceIfNull}{$V_{4k},V_{4k+1}$}}
		\State{\Call{ReplaceIfNull}{$V_{4k+1},V_{4k+2}$}}
		\State{\Call{ReplaceIfNull}{$V_{4k+2},V_{4k+3}$}}
		\Let{$\share{\text{Children}_k}$}{$(\share{V_{4k}}, \share{V_{4k+1}}, \share{V_{4k+2}}, \share{V_{4k+3}})$}
		\EndFor

		\Statex
		\State{// Online phase: $O(\log n)$ rounds}
		\Let{$\share{\text{Children}}$}{\Call{RecProp}{$\share{\text{Children}}$}}
		\State{// Post-processing: $O(1)$ rounds}
		\For{$k \gets 1 \textrm{ to } \lceil n/4\rceil$}
		\Let{$(\share{V_{4k}}, \share{V_{4k+1}}, \share{V_{4k+2}}, \share{V_{4k+3}})$}{$\share{\text{Children}_k}$}
		\EndFor

		\For{$k \gets 1 \textrm{ to } n$}
		\If{$\share{V_k} \neq \share{\bot}$}
		\Let{$\share{\val(Z_k)}$}{$\share{\val(Z_k)}\times\share{V_k}$}
		\EndIf
		\EndFor
		\State \Return $\share{\text{Z}}$
		\EndFunction

		\Statex{}
		\Procedure{ReplaceIfNull}{$\share{v_\text{old}}, \share{v_\text{new}}$}
		\If{$\share{v_\text{old}} =\share{\bot}$}
		\Let{$\share{v_\text{old}}$}{$\share{v_\text{new}}$}
		\EndIf
		\EndProcedure
		\Function{UpProp}{$\share{\text{Child}_1}, \share{\text{Child}_2}$}
		\Let{$\share{\text{Root}}$}{$(\minleft(\share{\text{Child}_1}),\maxright(\share{\text{Child}_1}),$ $\minleft(\share{\text{Child}_2}),\maxright(\share{\text{Child}_2}))$}
		\State{\Call{ReplaceIfNull}{$\minleft(\share{\text{Root}}),\maxleft(\share{\text{Root}})}$}
		\State{\Call{ReplaceIfNull}{$\maxleft(\share{\text{Root}}),\minright(\share{\text{Root}})}$}
		\State{\Call{ReplaceIfNull}{$\minright(\share{\text{Root}}),\maxright(\share{\text{Root}})}$}
		\State \Return $\share{\text{Root}}$
		\EndFunction
		\Procedure{DownProp}{$\share{\text{Child}},$ $\share{\text{NewMin}},$ $\share{\text{NewMax}}$}
		\State{\Call{ReplaceIfNull}{$\minleft(\share{\text{Child}}),\text{\share{\text{NewMin}}}}$}
		\State{\Call{ReplaceIfNull}{$\maxleft(\share{\text{Child}}),\text{\share{\text{NewMin}}}}$}
		\State{\Call{ReplaceIfNull}{$\minright(\share{\text{Child}}),\text{\share{\text{NewMin}}}}$}
		\Let{$\maxright(\share{\text{Child}})$}{$\share{\text{NewMax}}$}
		\EndProcedure
	\end{algorithmic}
\end{algorithm}

Contrary to the secure maximum algorithm, Alg. \ref{alg:optim_aggregate} passes through the tree twice: from leaves to root, then from root to leaves.
The first phase identifies which consecutive sub-trees share minimum-maximum with equal coordinates.
The second phase propagates the aggregated values to the leaves.
Alg. \ref{alg:rec_prop} describes this recursive propagation.
The implementation of the ``upward'' and ``downward'' propagations of Alg. \ref{alg:optim_aggregate} are given in Alg. \ref{subalg:optim_aggregate}.

As each node stores four non-zero tuples, Alg. \ref{alg:optim_aggregate} implicitly assumes that the input list is a multiple of 4 in length.
If not, we can pad the list with placeholders without impacting the rest of the algorithm.

Since we rely on a binary-tree approach, the complexity of this algorithm is $O(m \log m)$ with $m$ the number of leaves.
This algorithm reduces the round complexity to $O(\log m)$ because we can parallelize the operations at each tree level.

The optimized multiplication loop (Alg. \ref{alg:optim_mult_loop}) reuses the same intuition as the optimized aggregation.
The main difference is that we do not want to aggregate some values but replicate some values.
We want to build a vector containing one multiplicative value for each non-zero tuple.
The algorithm starts by creating a vector of placeholders with a few multiplicative values.
Then, we use our recursive propagation to replace each placeholder with the closest (non-placeholder) value on the left.
Once this vector is built, we multiply our non-zero elements by their corresponding multiplicative value.
This element-wise vector multiplication requires one communication round.

Contrary to Alg. \ref{alg:optim_aggregate}, the recursive part of Alg. \ref{alg:optim_mult_loop} work on a vector of scalar instead of working on a vector of tuples.
This difference slightly simplifies the value propagation but requires more pre- and post-processing than Alg. \ref{alg:optim_aggregate}.
Hence, Alg. \ref{alg:optim_mult_loop} also uses the recursive Alg. \ref{alg:rec_prop} but with different ``upward'' and ``downward'' propagations described in Alg. \ref{alg:optim_mult_loop}.
As Alg. \ref{alg:optim_aggregate}, this optimized multiplication loop has a logarithmic round complexity.

\section{Matrix template estimation based on the population distribution}
\label{app:pop-dist}
This appendix considers the situation where either the population distribution of the non-zero counts of rows is known publicly, or one has a public sample from which one can compute statistics.

Assume that we know the population distribution $f$ where $f(d)$ is the probability that a row has $d$ non-zeros, and $F(d)$ is the probability that a row has $d$ or more non-zeros.  In that case, drawing rows identically and independently (as is needed for a sound statistical analysis) implies that we expect in a sample on average a fraction $F(d)$ of rows to require $d$ or more non-zeros, and the standard deviation on this estimate is $\sqrt{F(d) (1-F(d))/n}$ where $n$ is the total number of rows.  We can from this compute the probability $\delta(\lambda)$ that the real fraction of rows with more than $d$ non-zeros is larger than $F(d)+\lambda$ for $\lambda>0$.

In case one doesn't know the population distribution but has a sample from which one can estimate these statistics, a similar approach can be followed applying the appropriate formulas, e.g., the standard deviation of a sample of size $s$ is $\sqrt{p(1-p)/(s-1)}$ with $p$ the measured probability.

\begin{figure}
	\centering
	\begin{subfigure}{.45\linewidth}
		\centering
		\includegraphics[width=\linewidth]{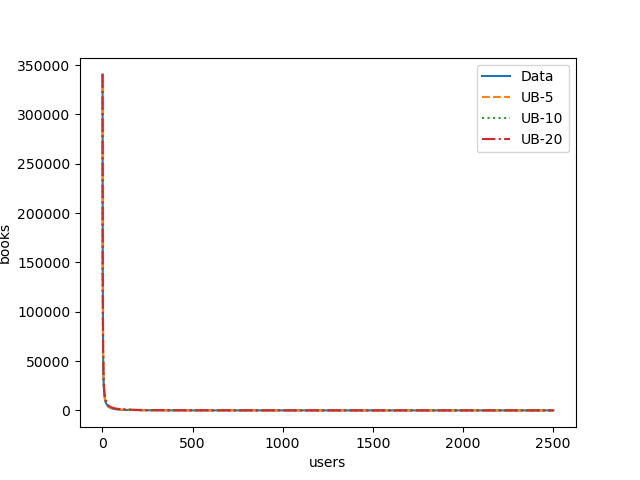}
		\caption{Linear scale}
		\Description[Plot showing the distribution of row non-zero counts, and upper bounds based on knowledge of the population distribution]{Plot showing the distribution of row non-zero counts, and upper bounds based on knowledge of the population distribution}
		\label{fig:ub.lin}
	\end{subfigure}
	\begin{subfigure}{.45\linewidth}
		\centering
		\includegraphics[width=\linewidth]{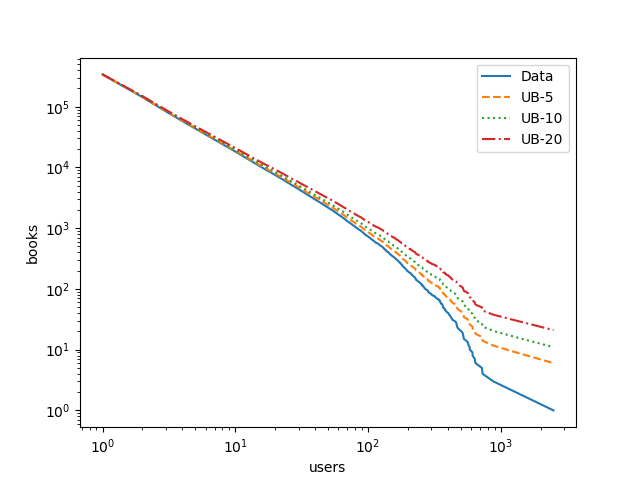}
		\Description[Plot showing the distribution of row non-zero counts, and upper bounds based on knowledge of the population distribution]{Plot showing the distribution of row non-zero counts, and upper bounds based on knowledge of the population distribution}
		\caption{Logarithmic scale}
		\label{fig:ub.log}
	\end{subfigure}
	\caption{Distribution of row non-zero counts, and upper bounds based on knowledge of the population distribution}
\end{figure}

Fig. \ref{fig:ub.lin} plots the number of non-zeros in the template matrix for every row in a way similar to Fig. \ref{fig:dp.recomm} for the Bookcrossing dataset.  As there is no differential privacy cost, we can take the size of a template block to be a single row.  The several curves use 5, 10 and 20 standard deviations, i.e., going up to a high level of confidence that the real data will fit in the template matrix if the data is drawn according to the population distribution (or the same distributon as the data from which statistics were taken).  In the plot, all curves coincide because much less margin is needed compared to the differential privacy based strategy.  To see the difference better, the same plot is shown in Fig. \ref{fig:ub.log} in logarithmic scale.

\section{Adapting existing secure sparse multiplication to the outsourced setting}
\label{app:adapt-non-outsourced}
Our work introduces new protocols to perform secure sparse multiplications in an outsourced setting.
While we introduce the first outsourced secure sparse multiplications, existing works \cite{chen_when_2021,cui_exploiting_2021,schoppmann_make_2019} had already described secure protocols, but in a non-outsourced setting.
Instead of introducing new protocols, one may wonder whether the existing protocols can be adapted to the outsourced setting.
This appendix details why such an adaptation of \cite{chen_when_2021,cui_exploiting_2021,schoppmann_make_2019} is not straightforward and can result in inefficient constructions.

\paragraph{Adapting \cite{chen_when_2021,cui_exploiting_2021}}
These works described sparse-dense matrix multiplications where one party $P_1$ knows the plaintext sparse matrix $X$ and the other party $P_2$ knows the plaintext dense matrix $Y$.
In both protocols, the dense matrix $Y$ is homomorphically encrypted and the first party $P_1$ performs a dense multiplication between their plaintext sparse matrix $X$ and the encrypted dense matrix $Y$ it receives from $P_2$.
In \cite{chen_when_2021}, the encrypted matrix $Y$ is entirely transmitted to the first party.
In \cite{cui_exploiting_2021}, the parties use a private information retrieval to avoid transmitting the whole matrix.

Despite their differences, these two protocols cannot be adapted to the outsourced setting for the same reasons.
First, the homomorphic encryption step is complex in an outsourced setting because there is no evident key dealer (contrary to the two-data-owner setup).
Second, these protocols are possible because $P_1$  knows the position of all non-zero values in $X$.
Indeed, $P_1$ performs a multiplication between the plaintext sparse matrix and the encrypted dense matrix.
This operation is possible because homomorphically encrypted values can be trivially multiplied by plaintext constants.
If these non-zero positions are hidden, this simple multiplication is no longer possible.
There is then no straightforward solution to adapt the approach of \cite{chen_when_2021,cui_exploiting_2021} to the outsourced setting.

Third, these protocols only support sparse-dense multiplications: the matrix $Y$ \textbf{must be dense}, or at least be converted into a dense representation before being homomorphically encrypted.
Otherwise, $P_1$ may learn which values of $Y$ are zero.
As a result, the memory needed to store $Y$ could blow up with several orders of magnitude, and so could the communication and computation costs.

\paragraph{Adapting Schoppmann et al. \cite{schoppmann_make_2019}}
They adopted a different approach than \cite{chen_when_2021,cui_exploiting_2021}: they introduce a low-level protocol called ``ROOM'' (Read-Only Oblivious Map) upon which they built their matrix multiplications.
The ROOM protocol performs a secure look-up in a secret key-value store.
This protocol guarantees that neither the key-value store, nor the look-up query are revealed.
They introduce three instantiations of the ROOM protocol: BasicROOM, CircuitROOM, and PolyROOM

They presented all their functionality in a two-data-owner setup with the query issued by the first party and the key-value store owned by the second.
However, we can easily write these functionalities for an outsourced setup.

Out of the three ROOM instanciations, only CircuitROOM is easily adaptabled to the outsourced setting because it relies on oblivious sorting and oblivious shuffling.
Since our protocols also rely on sorting and shuffling, this observation may lead the reader to two questions: (1) What are the differences between the CircuitROOM-based sparse multiplications and our protocols? (2) What are the implications of these differences on efficiency?

They use the ROOM protocol to extract a dense sub-matrix from a sparse matrix.
In contrast, we use sorting and shuffling for the overall sparse matrix multiplication.
Besides this key algorithmic difference, \emph{efficient} ROOM-based protocols in the outsourced setting is not straightforward.
The following paragraphs study each matrix multiplication types to identify the algorithmic and efficiency differences.

To multiply two sparse vectors $x$ and $y$, Schoppmann et al. \cite{schoppmann_make_2019} first use CircuitROOM to extract a dense vector of size $\nnz(x)$ from $y$ (i.e., corresponding to the non-zero values of $x$).
This operation costs $O((\nnz(x)+\nnz(y))\log(\nnz(x)+\nnz(y)))$ communications.
Their protocol then performs a dense vector multiplication between the non-zero values of $x$ and the extracted dense vector from $y$.
This overall communication cost is $O((\nnz(x)+\nnz(y))\log(\nnz(x)+\nnz(y)))$, as in our sparse vector multiplication (Alg. \ref{alg:sparse_vect_mult}).
However, their protocol has larger constant factors: while our protocol only requires an oblivious sort, CircuitROOM performs both an oblivious sort and oblivious shuffle.
Their protocol even requires a dense vector multiplication after CircuitROOM.

To multiply a sparse matrix $X$ with a sparse vector $y$, Schoppmann et al. \cite{schoppmann_make_2019} call the ROOM protocol once for each data owner.
In their protocol, each data owner shares a dense sub-matrix containing its non-zero columns (i.e., a column with at least one non-zero value).
Hence, their protocol call the matrix-vector multiplication (and the underlying ROOM protocol) for each matrix of non-zero columns; in other words, for each data owner.
On the contrary, our protocol can group the non-zeros from all data owners and process them all at the same time.

In their paper, this design choice is not an issue because there is only two data owners.
However, in an outsourced setup, each matrix row can be owned by a different data owner.
Hence, the matrix multiplication would perform one vector multiplication per matrix row.
Their communication cost is then $O((\nnz(X)+n\cdot\nnz(y))\log(\nnz(X)+\nnz(y)))$,  which is \textbf{worse than our cost, by a linear factor} $n$.
Indeed, Sec. \ref{subsec:alg-mat-vec} presents a non-trivial matrix-vector multiplication avoiding this linear factor (i.e., Alg. \ref{alg:sparse_mat_vect_mult}).
Since ML datasets can contain thousands or even millions of rows (e.g., 340K in the Bookcrossing dataset), the CircuitROOM-based matrix-vector multiplication would be inefficient compared to our Alg. \ref{alg:sparse_mat_vect_mult} (and even compared to the dense multiplication).

To compute a correlation matrix $X^{\top}X$, Schoppmann et al. \cite{schoppmann_make_2019} require, in addition to ROOM, another sub-protocol called "Scatter".
Unfortunately, their protocol described in the non-outsourced setting cannot be adapted easily to an outsourced setting because it relies on oblivious transfer, a purely two-party protocol with a party having access to the plaintext secret.
Hence, we cannot adapt their sparse matrix-matrix multiplication to the outsourced setting.

To sum up, the adaptation of the CircuitROOM-based sparse multiplication is either inefficient (for vector-vector and matrix-vector multiplications) or not straightforward (for matrix-matrix multiplications).

\paragraph{Adapting Lodia \cite{yu_lodia_2025}}
Finally, Yu et al. \cite{yu_lodia_2025} proposed a sparse matrix multiplication based on fully homomorphic encryption (FHE); assuming that a single party owns the sparse matrix.
This data owner decomposes the large sparse matrix into a multiplication of smaller and denser matrices.
These smaller matrices are then encrypted and can be multiplied by a third party securely.
This matrix decomposition could be used also for MPC protocols, as a sparse matrix encoding.
Unfortunately, it makes a strong assumption: having a single data owner; which differs from our outsourcing setting in which each row can be owned by a different party.
The matrix decomposition being a complex non-linear operation, there is no straightforward way to implement it under MPC and avoid the single-owner assumption.

\end{document}